\newtheorem{Proposition}{\bf{Proposition}}
\newcommand{\leftlabel}[1]{&&
  \refstepcounter{equation}\ltx@label{#1}%
  \tagform@{\theequation}&&}
\def\BibTeX{{\rm B\kern-.05em{\sc i\kern-.025em b}\kern-.08em
    T\kern-.1667em\lower.7ex\hbox{E}\kern-.125emX}}
\begin{document}

\title{Practical Design of Probabilistic Constellation Shaping for Physical Layer Security in Visible Light Communications}

\author{Thanh~V.~Pham,~\IEEEmembership{Member,~IEEE,}
        and Susumu Ishihara, ~\IEEEmembership{Member,~IEEE}
\thanks{Thanh V. Pham and Susumu Ishihara are with the Department of Mathematical and Systems Engineering, Shizuoka University, Shizuoka, Japan (e-mail: pham.van.thanh@shizuoka.ac.jp, ishihara.susumu@shizuoka.ac.jp). 

Part of this paper has been presented at the 2024 IEEE Wireless Communications and Networking Conference (2024 IEEE WCNC), Workshop on Optical Wireless Communications, Dubai, UAE, April 2024 \cite{Pham2024}. This work is supported by the Japan Society for Promotion of Science (JSPS) KAKENHI under Grant 23K13333.}} 
\maketitle
\begin{abstract}
 This paper studies a practical design of probabilistic constellation shaping (PCS) for physical layer security in visible light communications (VLC). In particular, we consider a wiretap VLC channel employing a probabilistically shaped $M$-ary pulse amplitude modulation (PAM) constellation. Considering the requirements for reliability of the legitimate user's channel, flickering-free transmission, and symmetric constellation distribution, the optimal constellation distributions to maximize modulation-constrained secrecy capacity or the bit error rate (BER) of eavesdropper's channel are investigated for both scenarios of known and unknown eavesdropper's channel state information (CSI). To formulate the constraint on the channel reliability, tractable closed-form expressions for the upper bound and approximate BER of $M$-ary PAM under an arbitrary symbol probability are derived. The design problem is shown to be non-convex due to the non-convex BER constraint. By proving that the upper bound BER is a concave function of the constellation distribution, a suboptimal solution based on the convex-concave procedure (CCCP) is presented. Our findings reveal that while the uniform signaling can only satisfy the BER constraint when the optical power is beyond a certain value, the proposed PCS design works in the entire region of the optical power. Some insights into the optimal constellation distribution with respect to the emitted optical power are also discussed. 

\end{abstract}

\begin{IEEEkeywords}
Visible light communications, probabilistic constellation shaping, physical layer security. 
\end{IEEEkeywords}


\section{Introduction}
\subsection{Background}
The past decade has seen tremendous growth in the number of mobile devices and data-intensive applications. This inevitably poses a serious burden on the existing wireless systems, which, at the same time, are suffering from the spectrum shortage problem \cite{Li2024}. In this regard, visible light communications (VLC) has emerged as an attractive technology to complement current radio frequency (RF) systems. Operating at the visible light frequencies, VLC offers several distinct advantages, such as high data rates, license-free spectrum, and no interference with co-existing RF systems \cite{Matheus2019}. 

Like conventional wireless systems, VLC is more susceptible to eavesdropping than wired communications due to the broadcast nature of the visible light signal. Traditional approaches against eavesdroppers are performed at the upper layers of the network protocol stack (i.e., network layer and above), primarily by means of key-based cryptography. Commonly used cryptographic algorithms such as the Advanced Encryption Standard (AES) and RSA are considered computationally secure as breaking the secret key within a reasonable amount of time is almost impossible with the current computing power. Nonetheless, the rapid development of quantum computers, whose computational capability is exponentially greater than their classical counterparts, presents a foreseeable threat to conventional security measures. Therefore, there has been great interest in information-theoretic security, including physical layer security (PLS), which can deal with eavesdroppers with unlimited computing resources \cite{Wyner1975}. A central issue of PLS is to characterize the so-called secrecy capacity, which defines the maximum transmission rate between a transmitter and a legitimate user satisfying that an eavesdropper cannot decode any information from its received signal. 
\subsection{Related Work}
Research on PLS in VLC systems predominantly focuses on the multiple-input single-output (MISO) configuration due to its more prevalent in practice as multiple LED luminaires are often deployed to provide sufficient illumination. Exploiting the spatial degrees of freedom provided by multiple LED transmitters, a considerable number of studies examined the use of precoding and artificial noise (AN) to improve the PLS performance. Specifically, various precoding designs have been investigated under different system configurations, secrecy constraints, and objectives using classical optimization \cite{mostafa2015physical, mostafa2016optimal, ma2016optimal, pham2017secrecy, arfaoui2018secrecy, Cho2020, SonDuong2021} or machine learning techniques \cite{Xiao2019, Hoang2024}. The use of AN (also known as friendly jamming) has also been explored in \cite{Shen2016,Wang2018OpticalJaming,Pham2018,Cho2019, Arfaoui2019,pham2020energy,Pham2024-journal}, which verified that AN-aided precoding was superior to precoding-only approach in enhancing the secrecy performance. Aside from precoding and AN, generalized space-shift keying (GSSK) with optimal LED pattern selection has also been employed as a PLS technique \cite{WangFasong2018}.    

It is worth mentioning that while studies on the PLS for MISO VLC systems were substantial, there were, however, sporadic works on the conventional single-input single-output (SISO) wiretap VLC channel, which involves one LED transmitter. Unlike RF systems, where the signal can be complex and is often constrained by an average power, the signal in VLC systems must be real, non-negative, and is usually subject to a peak power constraint due to the inherent limited linear range of the LEDs and/or the eye-safety regulations \cite{Soltani2022}. As a result, an exact, simple closed-form expression for the secrecy capacity of the VLC channels is still not known. Given the difficulty of deriving an exact closed-form formula, upper and lower bounds on the secrecy capacity were provided in \cite{Wang2018} and \cite{Wang2023} for the case of signal-independent and signal-dependent noise models, respectively. 
Moreover, it was shown in \cite{Ozel2015} that the secrecy capacity-achieving distribution of amplitude-constrained (i.e., peak power-constrained) optical wireless channels is discrete with a finite number of mass points. 

While previous studies on PLS for VLC systems predominantly focused on characterizing or optimizing the information-theoretic (lower/upper bound) secrecy capacity (i.e., maximum secrecy rate that can be reliably transmitted irrespective of modulation and hardware constraints), very little work investigated the issue of modulation-constrained secrecy capacity with communication reliability of the legitimate user's channel, and illumination quality (e.g., flickering-free), which are integral aspects of practical VLC systems. A recent study in \cite{Valeria2024} proposed a bias-hoping technique to enhance the physical secrecy performance of SISO VLC channels employing $M$-ary pulse amplitude modulation ($M$-PAM). Particularly, the DC bias is varied symbol by symbol in such a way that it degrades the performance of symbol detection at the eavesdropper, leading to its sufficiently high bit error rate (BER). 
Our works in \cite{Hoang2024} and \cite{Hoang2024-ICCE} studied the use of reinforcement learning and particle swarm optimization (PSO) for joint optimization of precoding and $M$-PAM modulation order to maximize the modulation-constraint secrecy capacity of MISO VLC channels considering the reliability of the legitimate user's channel and the unreliability of the eavesdropper's channel.   

\subsection{Our Contributions}
The implication of the result in \cite{Ozel2015} is that for an amplitude-based modulation scheme, the location of the constellation points (i.e., symbol amplitude) and their transmission probability should be jointly optimized to achieve the secrecy capacity. In literature, the optimizations of location and probability of the constellation points are respectively known as \textit{geometric constellation shaping (GCS)} and \textit{probabilistic constellation shaping (PCS)}. It is, however, noted that the possible irregularity of the constellation points induced by GCS would increase the complexity of the digital signal processing (DSP), thus rendering GCS impractical for commercial systems \cite{Junho2019}. This leaves PCS as a feasible approach to improve the secrecy performance. Although PLS approaches considering $M$-PAM were previously studied, e.g., in \cite{Hoang2024} and \cite{Valeria2024}, these works assumed the uniform distribution of the constellation symbols, i.e., no constellation shaping. 
{\textit{A limitation of uniform signaling in VLC systems is that when the emitted optical power is not sufficiently high (e.g., due to dimming requirements or to comply with the eye safety regulations), the BER of the legitimate user's channel can be unacceptably high, rendering the channel unreliable}}.   
In this paper, we, therefore, propose practical designs of PCS for SISO VLC wiretap channels, which aim at maximizing the modulation-constrained secrecy capacity or the BER of the eavesdropper's channel given constraints on the communication reliability of the legitimate user's channel, flickering-mitigated transmission, and symmetric constellation distribution. The main contributions of this work are summarized as follows. 
\begin{itemize}
    \item To incorporate the constraint of communication reliability into the optimal PCS design problem, we study the symbol error rate (SER) of $M$-PAM given an arbitrary distribution of the constellation symbols. Since an exact SER expression remains unknown,  tractable upper bound and approximation are presented in closed-form expressions. The tightness of the derived bound and approximation is compared to the exact SER with different settings of constellation distribution. Furthermore, constraints to fulfill the requirements of flickering-free transmission and symmetric constellation distribution are also introduced. 
    \item PCS design approaches that maximize the modulation-constrained secrecy capacity for the case of known and unknown eavesdropper's channel state information (CSI) at the transmitter are investigated. In the case of the unknown eavesdropper's CSI, an average secrecy capacity over the eavesdropper's CSI is considered, which, however, does not lead to a tractable mathematical expression. Therefore, a simple lower bound is presented and utilized for optimization. Moreover, we study a PCS design based on the Quality-of-Service (QoS) perspective that aims to maximize the BER of the eavesdropper's channel. 
    \item We prove that the upper bound and approximation BER expressions are concave functions of the constellation distribution. As a result, the optimal design problems are shown to be not convex due to the non-convexity of the communication reliability constraint (i.e., the constraint that the upper-bound BER of the legitimate user's channel is below a predefined threshold). The concavity of the upper bound and approximate BER also enables the use of the convex-concave procedure (CCCP), which guarantees solving local optima of non-convex problems with reasonable complexity \cite{yuille2003}.

\end{itemize}

\subsection{Organization}
The rest of the paper is structured as follows. The wiretap VLC system and channel model are described in Section II. The modulation-constrained secrecy capacity of PCS $M$-PAM is derived in Section III. Section IV presents the proposed practical PCS design, which takes into consideration the constraints on the BER of the legitimate user's channel, flickering-free transmission, and symmetric constellation distribution. Solutions to the proposed PCS designs are given in Section V  for the case of known and unknown eavesdropper's CSI. 
Simulation results are discussed in Section VI, and the paper is concluded in Section VII. 

{\it{Notation}}: The following notations are used throughout the paper. Uppercase and lowercase letters (e.g.,  $\mathbf{H}$ and $\mathbf{h}$) represent matrices and colum, n vectors, respectively, with $\mathbf{h}^T$ being the transpose of $\mathbf{h}$. Moreover, $\mathbf{0}_{N}$ and $\mathbf{1}_{N}$ are the all-zero and all-one vectors of size $N$. $\mathbb{I}(\cdot;\cdot)$ and $\mathbb{H}(\cdot)$ denote the mutual information and differential entropy, respectively. Also, $\mathbb{E}[\cdot]$ is the expected value operation, $|\cdot|$ is the absolute value.

\section{System and Channel Model}
\label{system}
\begin{figure}[t]
    \centering
    \includegraphics[width = 7.5 cm, height = 5.2cm]{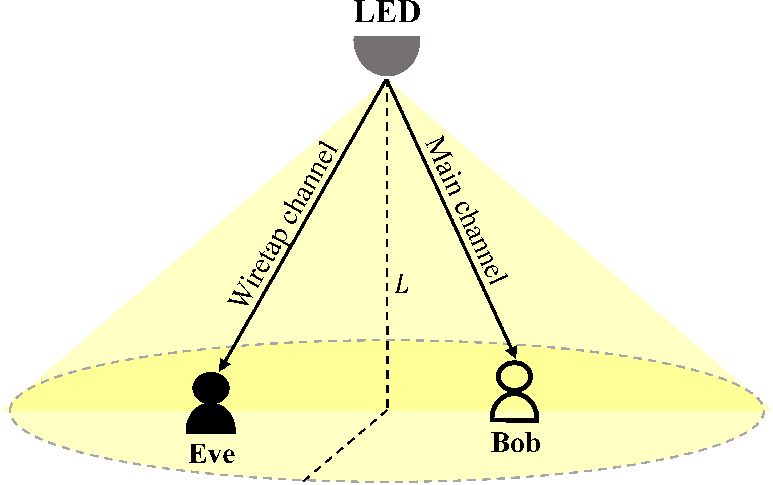}
    \caption{SISO VLC wiretap channel.}
    \label{Fig0}
\end{figure}
As depicted in Fig.~\ref{Fig0}, we examine a VLC wiretap channel consisting of an LED transmitter, a legitimate user Bob, and an eavesdropper Eve. Assume that Bob and Eve are both equipped with a single photodiode (PD) receiver and the LED is $L$ (m) above the receiver plane. 
Let $h_{\text{U}}$ be the channel gain between the LED and the user U\footnote{In the subsequent analysis, when necessary, the subscript `U' is replaced by `B' or `E' to refer to Bob or Eve, respectively.}. For the LED transmitter, let us also assume a Lambertian beam distribution whose emission intensity is given by
\begin{align}
    \mathcal{L}(\phi) = \frac{l+1}{2\pi}\cos^l(\phi),
\end{align}
where $\phi$ is the angle of irradiance and $l = \frac{-\log(2)}{\log(\cos(\Phi_{1/2}))}$ with $\Phi_{1/2}$ being the LED's semi-angle at half
power is the order of Lambertian emission. 
According to \cite{Komine2004}, $h_{\text{U}}$ is given by 
\begin{align}
 h_\text{U} =
  \begin{cases}
    \frac{A_{\text{r}}}{d^2_{\text{U}}}\mathcal{L}(\phi)T(\psi_{\text{U}})g(\psi_{\text{U}})\cos(\psi_{\text{U}})     & \quad 0 \leq \psi_{\text{U}} \leq \Psi_{\text{U}},\\
    0  & \quad \psi_{\text{U}} > \Psi_{\text{U}},
  \end{cases}
  \label{channel_gain}
\end{align}
where $A_r$ is the area of the PD, $d_{\text{U}}$ is the LoS link length, $\psi_{\text{U}}$ is the angle of incidence, and $\Psi_{\text{U}}$ is the field of view (FoV) of the PD. Here, $T(\psi_{\text{U}})$ is the gain of the optical filter and $g(\psi_{\text{U}})$ is the gain of the optical concentrator. Denoting $\kappa$ as the refractive index of the concentrator,  $g(\psi_{\text{U}})$ is given by
\begin{align}
 g(\psi_{\text{U}}) =
  \begin{cases}
    \frac{\kappa^2}{\sin^2(\Psi_{\text{U}})}    & \quad 0 \leq \psi_{\text{U}} \leq \Psi_{\text{U}},\\
    0  & \quad \psi_{\text{U}} > \Psi_{\text{U}}.
  \end{cases}
\end{align}
\section{Modulation-constrained Secrecy Capacity Analysis}
A probabilistically shaped constellation can be generated by a distribution matcher, which maps a uniform bitstream into the desired symbol distribution. A commonly used distribution matcher is the constant composition distribution matcher (CCDM), which is invertible, fixed-to-fixed length mapping, and has low complexity \cite{Schulte2016}. Let $s_1$, $s_2$, $\hdots$, $s_M$ be the  $M$ equally-spaced bipolar $M$-PAM symbols generated from a CCDM and denote $a_m$, $p_m$ as the amplitude and transmission probability of $s_m$. Let us also assume that the symbol peak amplitude is $A$, that is, $|a_m| \leq A$. Due to the symbol symmetry around 0,   $a_m = (2m-M-1)\frac{A}{M-1}$ for $m = 1,~\hdots,~M$. 

Assuming $s$ be the transmitted symbol. To satisfy the nonnegative constraint, a DC bias $I_{\text{DC}}$ is added to the data symbols, resulting in the LED's drive current as
\begin{align}
    x = s + I_{\text{DC}}.
    \label{drive_current}
\end{align}
It is well-known that for an LED, there exists a dynamic linear range over which the output optical power is proportional to the input drive current. Let $I_{\text{min}}$ and $I_{\text{max}}$ be the lower and upper limit of the dynamic linear range. Constraining \eqref{drive_current} between $I_{\text{min}}$ and $I_{\text{max}}$ results in 
\begin{align}
    |s| \leq \Delta_{\text{DC}}, 
\end{align}
where $\Delta_{\text{DC}} = \min\left(I_{\text{max}} - I_{\text{DC}},~I_{\text{DC}} - I_{\text{min}}\right)$. Therefore, it is reasonable to assume  that the peak amplitude $A$ is set to $A  = \Delta_{\text{DC}}.$

If $\eta$ is denoted as the electrical-to-optical conversion factor, the LED's output optical power is $P_{\text{t}} = \eta x$. The received electrical signal at the receiver $U$ is then given by
\begin{align}
    y_{\text{U}} & = \gamma h_{\text{U}}P_{\text{t}} + n_{\text{U}} 
                  = h_{\text{U}}\gamma\eta(s + I_{\text{DC}}) + n_{\text{U}},
    \label{received-current-signal}
\end{align}
where $\gamma$ is the responsivity of the PD and $n_{\text{U}}$ is assumed to be a zero-mean additive white Gaussian noise with variance $\sigma^2_{\text{U}}$. Here, 
\begin{align}
    \sigma^2_{\text{U}} = B_{\text{mod}}\left(2e\gamma h_{\text{U}}P_{\text{t}} + 4\pi e \gamma A_r\chi_{\text{amb}}\left(1-\cos\Psi_{\text{U}}\right) + i^2_{\text{amp}}\right),
\end{align}
where $B_{\text{mod}}$ is the modulation bandwidth, $e$ is the elementary charge, $\chi_{\text{amb}}$ is the ambient light photocurrent, and $i_{\text{amp}}$ is the pre-amplifier noise density current \cite{zeng2009high}. 

In \eqref{received-current-signal}, the DC bias carries no information, which thus can be removed for symbol detection. The filtered signal is then written by
\begin{align}
    \overline{y}_{\text{U}} = h_{\text{U}}\gamma\eta s + n_{\text{U}}. 
    \label{received-current-signal-filtered-out}
\end{align}
The capacity of the above channel is given as the mutual information between $s$ and $\overline{y}_{\text{U}}$, which is
\begin{align}
    \mathbb{I}(s; \overline{y}_{\text{U}}) & = \mathbb{H}(\overline{y}_{\text{U}}) - \mathbb{H}(\overline{y}_{\text{U}}|s) 
     = \mathbb{H}(\overline{y}_\text{U}) - \mathbb{H}(n_{\text{U}}) \nonumber \\
    & = -\int_{-\infty}^\infty p(\overline{y}_{\text{U}})\log_2 p(\overline{y}_{\text{U}})\text{d}\overline{y}_{{\text{U}}}- \frac{1}{2}\log_2(2\pi e\sigma^2_{\text{U}}),
    \label{mutual_information}
\end{align}
where $p(\overline{y}_{\text{U}})$ denotes the probability density function (PDF) of $\overline{y}_{\text{U}}$. Due to \eqref{received-current-signal-filtered-out} and the assumption that $n_\text{U}$ is Gaussian, $p(\overline{y}_{\text{U}})$ can be written as
\begin{align}
    p(\overline{y}_{\text{U}}) & = \sum_{m = 1}^M p\left(\overline{y}_\text{U} | s = s_m\right)\text{Pr}(s = s_m) \nonumber \\
    & = \frac{1}{\sqrt{2\pi\sigma^2_{\text{U}}}}\sum_{m=1}^M p_m \exp\left(-\frac{\left(\overline{y}_\text{U} - r_{\text{U}, m}\right)^2}{2\sigma^2_\text{U}}\right),
    \label{mixedGaussianPDF}
\end{align}
where $r_{\text{U}, m} = h_\text{U}\gamma\eta a_m$. 
Since the secrecy capacity is the difference between the capacities of Bob's and Eve's channels, it is given by
\begin{align}
    C_{\text{s}}(\mathbf{p})  =  & -\int_{-\infty}^\infty p(\overline{y}_{\text{B}})\log_2 p(\overline{y}_{\text{B}})\text{d}\overline{y}_{{\text{B}}} + \int_{-\infty}^\infty p(\overline{y}_{\text{E}})\log_2 p(\overline{y}_{\text{E}})\text{d}\overline{y}_{{\text{E}}} \nonumber \\ 
   & + \frac{1}{2}\log_2\left(\frac{\sigma^2_\text{E}}{\sigma^2_\text{B}}\right), 
   \label{secrecy_capacity}
\end{align}
where $\mathbf{p} = \begin{bmatrix}p_1 & p_2 & \cdots & p_M
\end{bmatrix}^T$ is used for mathematical convenience in the later parts of the paper. In this work, we are interested in the case of a positive secrecy capacity, which is attained if and only if $\frac{h_{\text{B}}}{\sigma_{\text{B}}} > \frac{h_{\text{E}}}{\sigma_{\text{E}}}$ holds. In other words, the signal observed by Eve is a degraded version of that received by Bob. 
\section{Practical PCS Design}
Given the secrecy capacity in \eqref{secrecy_capacity}, a straightforward optimal design of the probability vector $\mathbf{p}$ can be as follows 
\begin{subequations}
\label{OptProb1}
    \begin{alignat}{2}
        &\underset{\mathbf{p}}{\text{maximize}} & \hspace{2mm} & C_{\text{s}}(\mathbf{p}) \label{obj1}\\
        &\text{subject to }  &  & \nonumber \\
        & & & 0 \leq \mathbf{p} \leq 1, \label{constraint11} \\
        & & & \mathbf{1}^T_M \mathbf{p} = 1. \label{constraint12}
    \end{alignat}
\end{subequations}
Under the assumption that $\frac{h_{\text{B}}}{\sigma_{\text{B}}} > \frac{h_{\text{E}}}{\sigma_{\text{E}}}$, $C_{\text{s}}(\mathbf{p})$ is a concave function of $\mathbf{p}$ \cite{Dijk1997}. The problem in \eqref{OptProb1} is, therefore, a convex optimization problem, which can be solved efficiently using available software \cite{cvx}.  

However, the problem design in \eqref{OptProb1} ignores the following three practically important issues.
\begin{enumerate}
\item \textbf{Communication Reliability}: Recall that secrecy capacity is the maximum rate at which information can be reliably and secretly transmitted from Alice to Bob. In practical communications systems, communication reliability is guaranteed by advanced forward error-correction codes (FEC). Assuming the use of hard decision (HD)-FEC, a certain BER before FEC decoding (referred to as pre-FEC BER) must be achieved so that the BER after FEC decoding (i.e., post-FEC BER) can be made sufficiently low to attain channel reliability. Let $P^{\text{U}}_{\text{e}} (\mathbf{p})$ be the pre-FEC BER at the user $\text{U}$. A reliable transmission implies $   P^{\text{U}}_{\text{e}}(\mathbf{p}) \leq {P}^{\text{pre}}_{\text{e}}, $
with ${P}^{\text{pre}}_{\text{e}}$ being the upper limit for the pre-FEC BER. Apparently, this requirement was not considered in \eqref{OptProb1}.
\item \textbf{Flickering}: In VLC, the communication functionality is secondary to the illumination requirement. That is, any negative effects on the illumination caused by the communications must be mitigated as much as possible. In this aspect, the second shortcoming of \eqref{OptProb1} is that a constraint on channel-induced flickering was not considered. To see this, let us first examine the case of uniform symbol transmission where $\mathbb{E}[s] = 0$. Following \eqref{drive_current}, the average LED's emitted optical power is given by $\mathbb{E}[P_t] = \eta I_{\text{DC}}$, which is independent of the solution to \eqref{OptProb1}. However, in the case of PCS, one can see that $\mathbb{E}[P_t] = \eta (I_{\text{DC}} + \mathbf{a}^T\mathbf{p})$ (where $\mathbf{a} = \begin{bmatrix}a_1 & a_2 & \cdots & a_M
\end{bmatrix}^T$), which is generally dependent of $\mathbf{p}$. As the solution to \eqref{OptProb1} typically changes with the channels ${h}_\text{B}$ and ${h}_{\text{E}}$, the average LED's emitted optical power varies with the channels as well. This inevitably raises the problem of flickering when channels change due to users' movements. 
\item \textbf{Symmetric Constellation Distribution}: PCS can be practically enabled by the probabilistic amplitude shaping (PAS) architecture \cite{Georg2015}. However, the PAS assumes that the shaped constellation distribution is symmetric around zero. Obviously, the solution to \eqref{OptProb1} when constraints on communication reliability and flickering are considered may not satisfy this requirement.  
\end{enumerate}
In the following, we address the above-mentioned issues by incorporating additional constraints on the constellation distribution to \eqref{OptProb1}.  
\subsection{Bit-Error Rate of PCS-PAM}
Characterizations of the BER and SER of $M$-PAM are classical problems in communication theory. Nonetheless, most studies in the literature assume the uniform distribution of the constellation symbol. This assumption is often justifiable in practice since the source bits are generally uniformly distributed. Assuming the Gaussian noise, a closed-form expression of the SER can be easily derived by determining the symbol decision regions via the maximum likelihood (ML) detection rule \cite{proakis2008digital}. In a more general case when the symbol distribution is not uniform, 
the maximum a posteriori (MAP) detector archives the optimal performance. The detection rule for the channel in \eqref{received-current-signal-filtered-out}, according to the MAP detector, is given by \cite{proakis2008digital}
\begin{align}
    \hat{m} = \underset{1 \leq m \leq M}{\text{arg max}}~~p_mp(\overline{y}_{\text{U}}|s = s_m). 
    \label{MAP-detector}
\end{align}
The decision region of the symbol $s_m$ is thus defined by
\begin{align}
    D_m = \left\{\overline{y}_{\text{U}}\in \mathbb{R}:~ p_mp(\overline{y}_{\text{U}}|s=s_m) > p_{n}p(\overline{y}_{\text{U}}|s=s_{n})\right\},
\end{align}
for all $1 \leq n \leq M$ and $n \neq m$, 
which can be simplified  as
\begin{align}
    D_m = \left\{\overline{y}_{\text{U}}\in \mathbb{R}:~2\overline{y}_{\text{U}}(r_{\text{U}, m} - r_{\text{U}, n}) > 2\sigma^2_{\text{U}}\log\frac{p_{n}}{p_m} + r^2_{\text{U}, m} - r^2_{\text{U},n}\right\}.
    \label{decision-region}
\end{align}
To derive the exact SER, one must specifically characterize $D_m$, which, however, can be very challenging due to the generality of the distribution $\mathbf{p}$. In very recent work, exact formulas for the SERs of $M$-PAM and $M$-ary quadrature amplitude modulation ($M$-QAM) are derived where the symbol probability is drawn from the Maxwell-Boltzmann distribution, which maximizes the channel capacity under an average symbol power constraint \cite{Gutema2023}. It should be noted that the assumption of the Maxwell-Boltzmann distribution allows explicit characterizations of the decision regions, hence explicit expressions for the SER. However, the Maxwell-Boltzmann distribution may not be optimal when the objective is to maximize the secrecy capacity under the BER, flickering, and distribution symmetry constraints. To the best of our knowledge, while a simple closed-form expression for the exact SER has not yet been available in the general case, several lower and upper bounds with different degrees of tightness and complexity have been proposed \cite{Caen1997, Kuai2000, Mao2013}. It should be noted that the derived bounds are typically based on evaluating the probability of a union of events whose upper bound is generally tractable. Nonetheless, the proposed lower bounds are more complicated and are not given in closed-form expressions. Therefore, to facilitate the optimal PCS designs, we first focus on deriving tractable closed-form upper bound and approximation formulas for the SER of $M$-PAM constellation with arbitrary distribution.  
\subsubsection{Upper Bound}
To derive an upper bound for the SER, let $E_{m, n}$ be the event that the transmitted symbol $s_m$ is detected as symbol $s_n$ ($n \neq m$) and $P_{m,n}(\mathbf{p})$ be the probability of $E_{m,n}$. An upper bound for the SER is then given by
\begin{align}
    P_s^{\text{U}}(\mathbf{p}) & = \sum_{m = 1}^Mp_m\text{Pr}\left(\bigcup\limits_{\substack{n = 1 \\ n\neq m}}^{M}E_{m,n}\right)  \leq \sum_{m=1}^M p_m\sum_{\substack{n = 1 \\ n\neq m}}^M P^{\text{U}}_{ m,n}(\mathbf{p}),
\end{align}
where, according to the detection rule in \eqref{MAP-detector}, we have
\begin{align}
    P^{\text{U}}_{m, n}(\mathbf{p}) = \text{Pr}\left(p_mp(\overline{y}_{\text{U}}|s = s_m) \leq p_np(\overline{y}_{\text{U}}|s = s_n)\right).
    \label{SEP1}
\end{align}

\begin{Proposition}
A closed-form expression for \eqref{SEP1} is given by
\begin{align}
    P^{\rm{U}}_{m, n}(\mathbf{p}) = \frac{1}{2}{\rm{erfc}}\left(\frac{2\sigma^2_{\rm{U}}\log\frac{p_m}{p_n} + d^2_{{\rm{U}}, m, n}}{2\sqrt{2}\sigma_{{\rm{U}}}|d_{{\rm{U}}, m, n}|}\right),
    \label{closed-form-SEP}
\end{align}
where $d_{{\rm{U}}, m,n} = h_{{\rm{U}}}\gamma\eta(a_m - a_{n})$. 
\end{Proposition}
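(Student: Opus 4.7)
The plan is to reduce the event in \eqref{SEP1} to a one-dimensional Gaussian tail probability on the noise $n_{\rm U}$, from which the $\tfrac{1}{2}\,{\rm erfc}(\cdot)$ expression follows by a standard identity. The crucial structural observation is that, although \eqref{SEP1} compares two mixture components, each conditional density $p(\overline{y}_{\rm U}\mid s=s_k)$ is Gaussian with the same variance $\sigma^2_{\rm U}$ and means $r_{{\rm U},k}$. Hence the ratio $p_m p(\overline{y}_{\rm U}\mid s_m)/\bigl(p_n p(\overline{y}_{\rm U}\mid s_n)\bigr)$ is log-linear in $\overline{y}_{\rm U}$, and the error event collapses to a half-line. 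This is exactly the manipulation already carried out for the decision region in \eqref{decision-region}, which I would reuse in complementary form: the event $\{p_m p(\overline{y}_{\rm U}\mid s_m)\leq p_n p(\overline{y}_{\rm U}\mid s_n)\}$ is equivalent to $\{2\overline{y}_{\rm U}\,d_{{\rm U},m,n} \leq 2\sigma^2_{\rm U}\log(p_n/p_m) + r^2_{{\rm U},m} - r^2_{{\rm U},n}\}$, where $d_{{\rm U},m,n}=r_{{\rm U},m}-r_{{\rm U},n}$.

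Next, since \eqref{SEP1} is implicitly conditioned on $s_m$ being transmitted, I would substitute $\overline{y}_{\rm U} = r_{{\rm U},m} + n_{\rm U}$ with $n_{\rm U}\sim\mathcal{N}(0,\sigma^2_{\rm U})$ into the affine inequality. The deterministic terms then telescope via $2 r_{{\rm U},m}\, d_{{\rm U},m,n} - (r^2_{{\rm U},m}-r^2_{{\rm U},n}) = d^2_{{\rm U},m,n}$, reducing the condition to $2 d_{{\rm U},m,n}\, n_{\rm U} \leq 2\sigma^2_{\rm U}\log(p_n/p_m) - d^2_{{\rm U},m,n}$, a single linear inequality in one Gaussian random variable.

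The final step is to divide by $2 d_{{\rm U},m,n}$ and apply the identity $\Pr(n_{\rm U}\leq t) = \tfrac{1}{2}\,{\rm erfc}\!\bigl(-t/(\sqrt{2}\,\sigma_{\rm U})\bigr)$. The only real subtlety, and the step I expect to be the main obstacle, is the sign of $d_{{\rm U},m,n}$: depending on whether $a_m > a_n$ or $a_m < a_n$, dividing flips the direction of the inequality and therefore flips the argument of ${\rm erfc}$. I would handle the two cases separately, obtain the probabilities in the forms $\tfrac{1}{2}{\rm erfc}\!\bigl((2\sigma^2_{\rm U}\log(p_m/p_n)+d^2_{{\rm U},m,n})/(2\sqrt{2}\sigma_{\rm U}\, d_{{\rm U},m,n})\bigr)$ and its sign-flipped counterpart, and then unify them by absorbing the sign into the denominator as $|d_{{\rm U},m,n}|$ while simultaneously flipping $\log(p_n/p_m)$ to $\log(p_m/p_n)$. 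This yields exactly the formula \eqref{closed-form-SEP}. The argument is purely mechanical once the sign bookkeeping is laid out, so no nontrivial estimate or approximation enters the proof.
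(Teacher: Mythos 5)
Your proposal is correct and follows essentially the same route as the paper's Appendix A: reduce the pairwise error event to an affine inequality in the Gaussian noise via the log-likelihood ratio, substitute $\overline{y}_{\text{U}} = r_{\text{U},m} + n_{\text{U}}$, and simplify the deterministic terms to $d^2_{\text{U},m,n}$. The only cosmetic difference is that the paper avoids your two-case sign analysis by leaving the inequality in the form $n_{\text{U}}d_{\text{U},m,n} \leq c$ and treating $n_{\text{U}}d_{\text{U},m,n}$ as a single zero-mean Gaussian with standard deviation $\sigma_{\text{U}}|d_{\text{U},m,n}|$, which produces the absolute value in the denominator automatically.
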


\begin{proof}
    The proof is given in Appendix A. 
\end{proof}
\noindent Assuming the Gray coding, the upper bound BER is thus given by
\begin{align}
    P^{\text{U}}_{\text{e,ub}}(\mathbf{p}) = \frac{1}{2\log_2M}\sum_{m = 1}^M\sum_{\substack{n = 1 \\ n\neq m}}^Mp_m{\rm{erfc}}\left(\frac{2\sigma^2_{\rm{U}}\log\frac{p_m}{p_n} + d^2_{{\rm{U}}, m, n}}{2\sqrt{2}\sigma_{{\rm{U}}}|d_{{\rm{U}}, m, n}|}\right).
    \label{PCS-BER}
\end{align}
\subsubsection{Approximation}
It is intuitive that for the transmitted symbol $s_m$, as the signal-to-noise ratio (SNR) increases or when $p_m$ is not much different, it is more likely to be wrongly detected as its adjacent symbol $s_{m-1}$ or $s_{m+1}$ (i.e., its closest symbols) than other symbols. That is $\text{Pr}\left(\bigcup\limits_{\substack{n = 1 \\ n\neq m}}^{M}E_{m, n}\right) \approx P^{\text{U}}_{ m,m-1}(\mathbf{p}) + P^{\text{U}}_{m, m+1}(\mathbf{p})$. As a result, a simple approximate expression of the SER can be given by
\begin{align}
    P_{s, \text{ap}}^{\text{U}}(\mathbf{p}) & =  \sum_{m=1}^M p_m\left(P^{\text{U}}_{ m,m-1}(\mathbf{p}) + P^{\text{U}}_{m, m+1}(\mathbf{p})\right) \nonumber \\
    & = \sum_{m = 1}^M \frac{p_m}{2}\left({\rm{erfc}}\left(\frac{2\sigma^2_{\rm{U}}\log\frac{p_m}{p_{m-1}} + d^2_{{\rm{U}}, m, m-1}}{2\sqrt{2}\sigma_{{\rm{U}}}|d_{{\rm{U}}, m, m-1}|}\right) \right. \nonumber \\ 
    & \left. \hspace{15mm}+ {\rm{erfc}}\left(\frac{2\sigma^2_{\rm{U}}\log\frac{p_m}{p_{m+1}} + d^2_{{\rm{U}}, m, m+1}}{2\sqrt{2}\sigma_{{\rm{U}}}|d_{{\rm{U}}, m, m+1}|}\right) \right).
    \label{approximate-SER}
\end{align}
In the above expression, we define $p_0 = p_{M+1} = 0$ and $d_{\text{U}, 1, 0} = \infty$, $d_{\text{U}, M, M+1} = -\infty$ since there is no symbol at the left to $s_1$ and no symbol at the right to $s_M$. 
\begin{figure*}[ht]
     \centering
     \begin{subfigure}[b]{0.48\textwidth}
         \centering
         \includegraphics[width=\textwidth, height = 5.2cm]{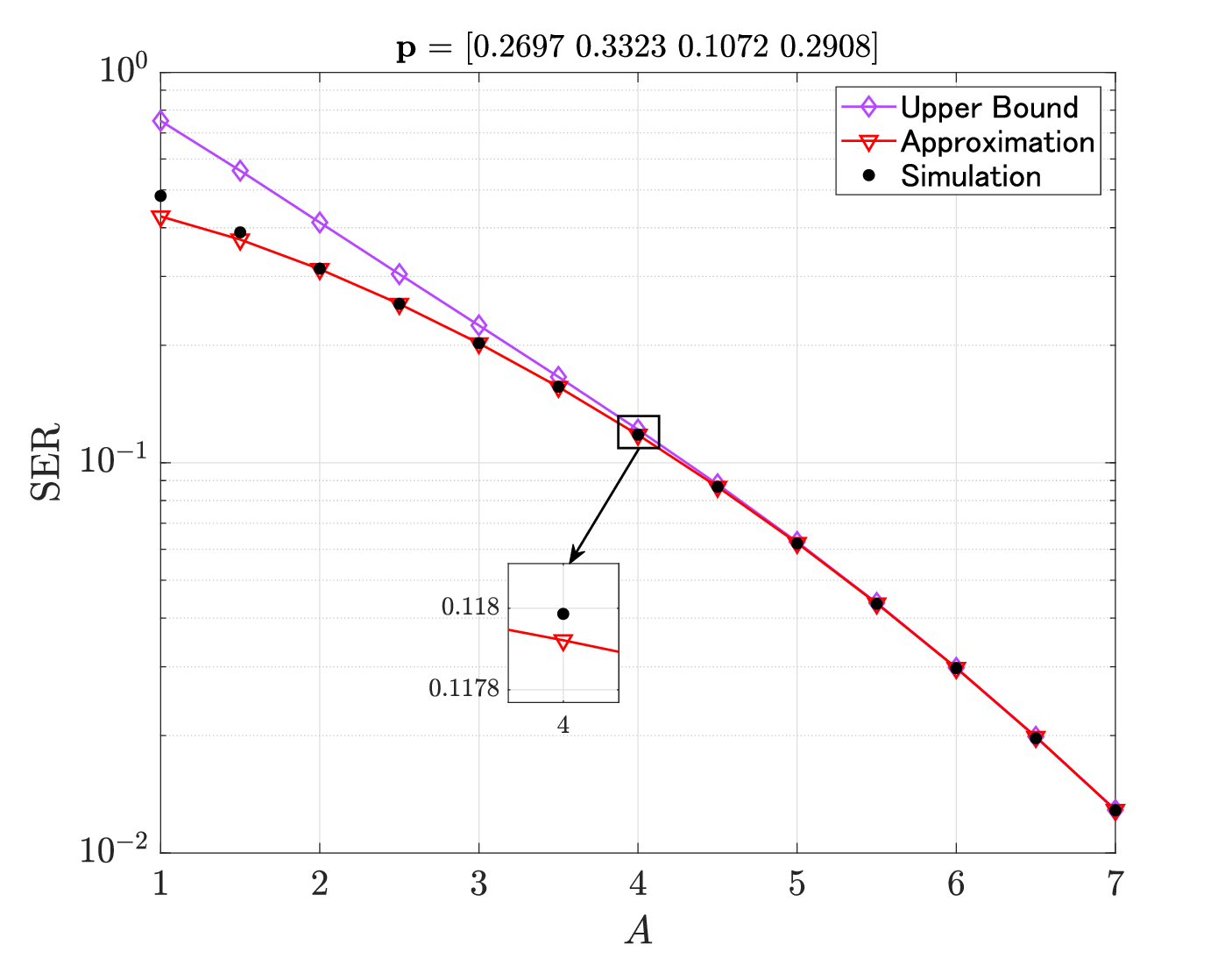}
         \caption{4-PAM.}
         \label{4PAM_1}
     \end{subfigure}
     \begin{subfigure}[b]{0.48\textwidth}
         \centering
         \includegraphics[width=\textwidth, height = 5.2cm]{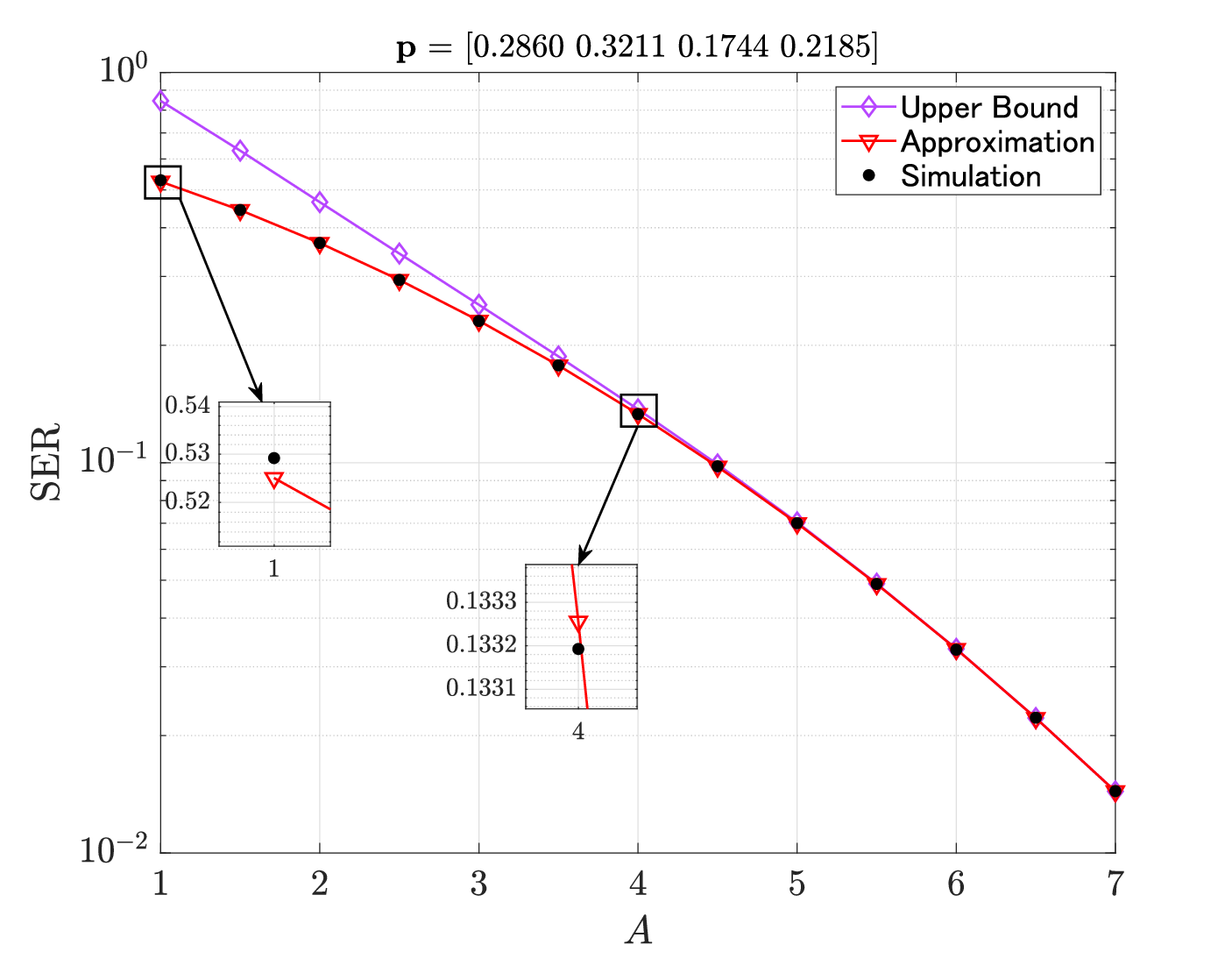}
         \caption{4-PAM.}
         \label{4PAM_2}
     \end{subfigure}
     \begin{subfigure}[b]{0.48\textwidth}
         \centering
         \includegraphics[width=\textwidth, height = 5.2cm]{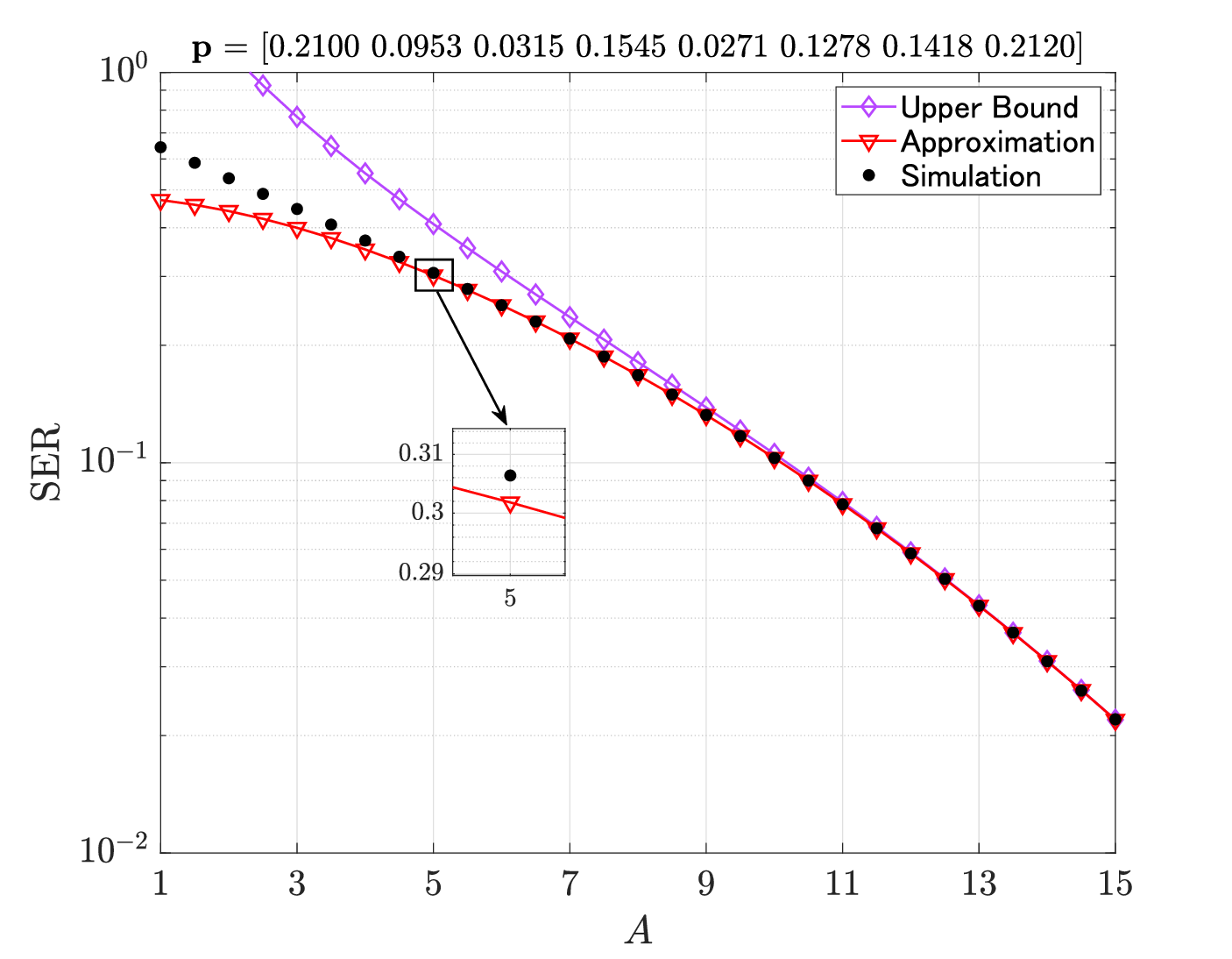}
         \caption{8-PAM.}
         \label{8PAM_1}
     \end{subfigure}
     \begin{subfigure}[b]{0.48\textwidth}
         \centering
         \includegraphics[width=\textwidth, height = 5.2cm]{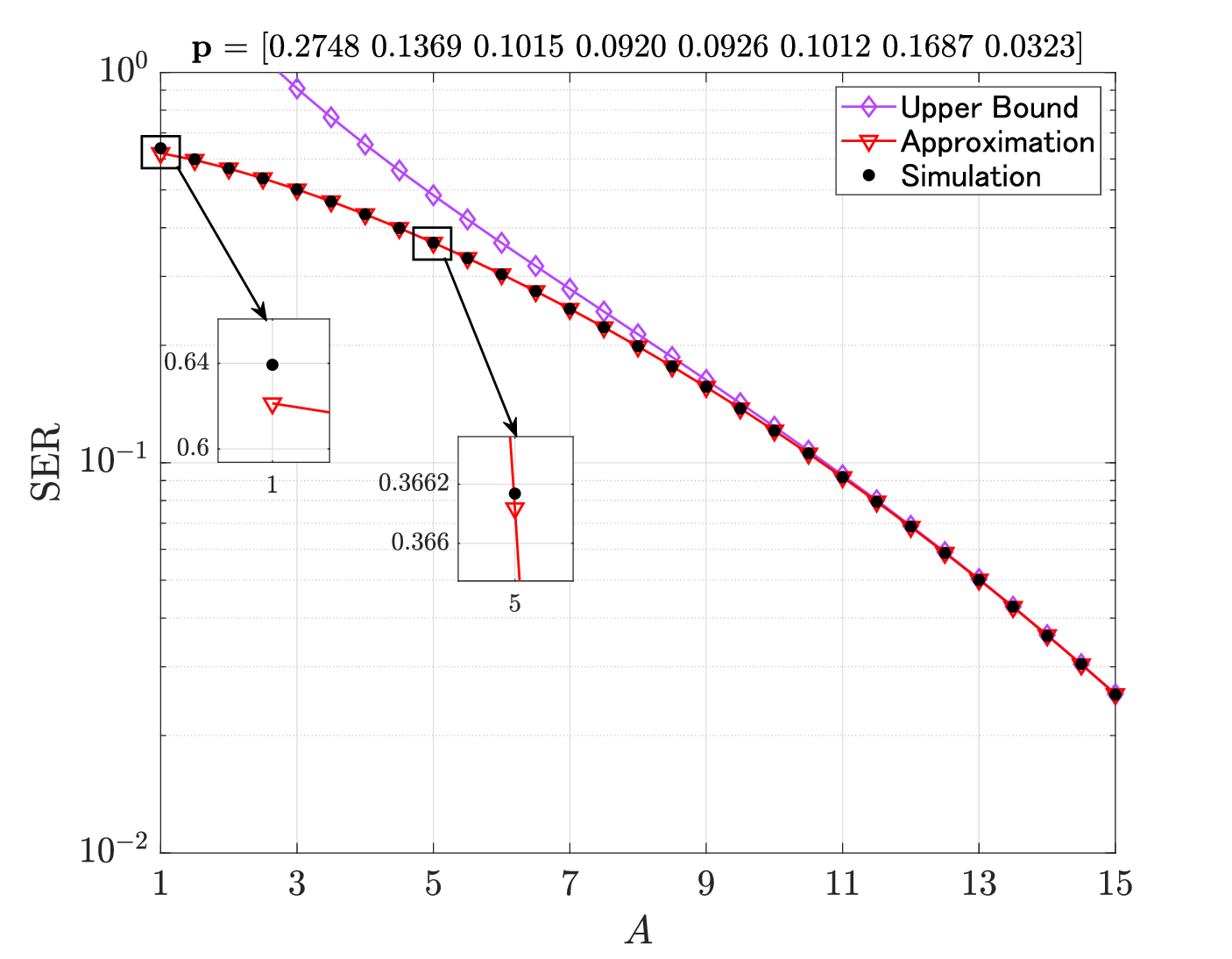}
         \caption{8-PAM.}
         \label{8PAM_2}
     \end{subfigure}
        \caption{Upper bound and approximate SER of PCS-PAM.}
        \label{PCS-SER}
\end{figure*}
\subsubsection{Numerial Examples}
We evaluate the tightness of the derived upper bound and approximate SER with respect to the exact value in Figs.~\ref{4PAM_1},~\ref{4PAM_2}, ~\ref{8PAM_1}, and \ref{8PAM_2} for randomly generated constellation distributions. $4$-PAM and $8$-PAM constellations are considered for comparisons. For convenience, $h_{\text{U}}\gamma\eta$ and $\sigma_{\text{U}}$ are both normalized to unity, i.e., $h_{\text{U}}\gamma\eta = 1$ and  $\sigma_{\text{U}} = 1$. The upper bound and approximate SER expressions are thus functions of $A$ and $\mathbf{p}$. Monte-Carlo simulations of the MAP detector in \eqref{MAP-detector} are also provided as the exact SER value. It is observed that while the upper bound and approximate expressions are tight at the high-value region of $A$ (i.e., high SNR region), they generally diverge from the exact value as $A$ decreases. Overall, the approximation provides a better fit to the exact value at the low value of $A$ as the upper bound tends to be very loose at that regime. It is also evident that the tightness of the approximation is dependent on the average discrepancy of adjacent symbols' transmission probabilities.  For example, the average discrepancy of adjacent symbols' transmission probabilities in Figs.~\ref{4PAM_1}, \ref{4PAM_2}, \ref{8PAM_1}, and \ref{8PAM_2} are 0.1571, 0.0753, 0.0877, and 0.0566, respectively, and it can be seen that the approximations in Figs.~\ref{4PAM_2} and \ref{8PAM_2} fit better with the exact value than those in Figs.~\ref{4PAM_1} and \ref{8PAM_1}.


\vspace{-1.5mm}
\subsection{Flickering Mitigation}
Recall that the average emitted optical power under arbitrary symbol probabilities induced by PCS is $\mathbb{E}[P_t] = \eta (I_{\text{DC}} + \mathbf{a}^T\mathbf{p})$. 
To minimize the effect of flickering resulting from the variation of $\mathbf{a}^T\mathbf{p}$, a straightforward way is to limit the magnitude of this variation to be sufficiently small, such as 
\begin{align}
    \left|\mathbf{a}^T\mathbf{p}\right| \leq \alpha I_{\text{DC}},
    \label{flickering-constraint}
\end{align}
where $\alpha$ is a chosen constant. By choosing $\alpha$ small enough, $\mathbb{E}[P_t]$ can be made approximately $\eta I_{\text{DC}}$, which is the average emitted optical power in the case of uniform signaling. Obviously, flickering is completely removed when $\alpha = 0$ is chosen. This could, however, result in a stringent constraint on $\mathbf{p}$, leading to a reduced secrecy performance. Hence, a sufficiently small $\alpha$, which may cause negligible flickering, could still be acceptable.   
\vspace{-1.5mm}
\subsection{Symmetric Constellation Distribution}
To ensure the symmetry of the constellation distribution in the case that the PAS architecture is employed, the following constraint is imposed on $\mathbf{p}$
\begin{align}
    \mathbf{S}\mathbf{p} = \mathbf{0}_{\frac{M}{2}},
    \label{symetry-constraint}
\end{align}
where the matrix $\mathbf{S} \in \mathbb{R}^{\frac{M}{2} \times M}$ has the form
\begin{align}
    \mathbf{S} = \begin{bmatrix}
        1 & 0 & 0 & \cdots & 0 & -1\\
        0 & 1 & 0 & \cdots & -1 & 0 \\
        & & & \vdots & & \\
        0 & \cdots & 1 & -1 & \cdots & 0 \\
    \end{bmatrix}.
\end{align}
That is, $\mathbf{S}[i,~i] = 1$, $\mathbf{S}[i,~M-i+1] = -1$ $\forall i = 1, 2,..., \frac{M}{2}$, $\mathbf{S}[i,~j] = 0$ $\forall j \neq i$ and $j \neq M-i+1$. It should be noted that the symmetry of the constellation distribution results in $\mathbf{a}^T\mathbf{p} = 0$ (since $\mathbf{a}$ is symmetric around zero), which implies the satisfaction of the flickering mitigation constraint. Thus, either constraint in \eqref{flickering-constraint} or \eqref{symetry-constraint} is considered in our PCS design. 
\section{PCS Design and Solution}
\subsection{Known Eve's CSI}
The objective of our PCS design is to maximize the secrecy capacity while satisfying the reliability of Bob's channel, flickering mitigation, and symmetric constellation distribution constraint. To ensure that the BER of Bob's channel is below the pre-FEC BER threshold, the upper bound BER expression in \eqref{PCS-BER} is employed, which results in the following design problem\footnote{Regarding the formulation of problem \eqref{OptProb2}, when a symmetric constellation distribution is required (e.g., due to the use of the PAS architecture), constraint in \eqref{constraint22} is replaced by \eqref{symetry-constraint}. Note that since both \eqref{constraint22} and \eqref{symetry-constraint} are linear constraints, using \eqref{symetry-constraint} does not require a different solving approach. }
\begin{subequations}
\label{OptProb2}
    \begin{alignat}{2}
        &\underset{\mathbf{p}}{\text{maximize}} & \hspace{2mm} & C_{\text{s}}(\mathbf{p}) \label{obj2}\\
        &\text{subject to }  &  & \nonumber \\
        & & & P^{\text{B}}_{\text{e,ub}}(\mathbf{p}) \leq P_{\text{e}}^{\text{pre}},
        \label{constraint21} \\ 
        & & & \left|\mathbf{a}^T\mathbf{p}\right| \leq \alpha I_{\text{DC}}, \label{constraint22} \\
         & & & \eqref{constraint11},~ \eqref{constraint12} \nonumber.
    \end{alignat}
\end{subequations}
It can be seen that while \eqref{constraint22}, \eqref{constraint11}, and \eqref{constraint12} are simple linear constraints, \eqref{constraint21} is a complex non-convex one due to the following proposition. 
\begin{Proposition}
$P^{\rm{U}}_{\rm{e, ub}}(\mathbf{p})$ is a concave function of $\mathbf{p}$.
\end{Proposition}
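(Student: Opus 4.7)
The plan is to sidestep a direct Hessian calculation (which is unwieldy, and in fact shows that individual summands in \eqref{PCS-BER} are not even concave on their own when $p_m/p_n$ becomes large), and instead exploit the probabilistic meaning of each pairwise error term. First, I would use Proposition~1 to rewrite \eqref{PCS-BER} as
\begin{align*}
P^{\text{U}}_{\text{e,ub}}(\mathbf{p}) \;=\; \frac{1}{\log_2 M}\sum_{m=1}^{M}\sum_{\substack{n=1 \\ n\neq m}}^{M} p_m\, P^{\text{U}}_{m,n}(\mathbf{p}),
\end{align*}
and then group the $(m,n)$ and $(n,m)$ contributions so that the outer sum runs over unordered pairs $m<n$.

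The key step is to return to the definition in \eqref{SEP1}: letting $A_{m,n} = \bigl\{\overline{y}_{\text{U}} : p_m p(\overline{y}_{\text{U}}|s_m) \leq p_n p(\overline{y}_{\text{U}}|s_n)\bigr\}$, the probability in \eqref{SEP1} is taken with $\overline{y}_{\text{U}}$ drawn from $p(\cdot \,|\, s_m)$, so
\begin{align*}
p_m P^{\text{U}}_{m,n}(\mathbf{p}) \;=\; \int_{A_{m,n}} p_m\, p(\overline{y}_{\text{U}}|s_m)\,\mathrm{d}\overline{y}_{\text{U}}.
\end{align*}
On $A_{m,n}$ the integrand equals $\min\bigl(p_m p(\overline{y}_{\text{U}}|s_m),\, p_n p(\overline{y}_{\text{U}}|s_n)\bigr)$ and symmetrically on $A_{n,m}$; since $A_{m,n}$ and $A_{n,m}$ partition $\mathbb{R}$ up to a null set, the paired contribution collapses into a single integral,
\begin{align*}
p_m P^{\text{U}}_{m,n}(\mathbf{p}) + p_n P^{\text{U}}_{n,m}(\mathbf{p}) \;=\; \int_{\mathbb{R}} \min\bigl(p_m p(\overline{y}_{\text{U}}|s_m),\, p_n p(\overline{y}_{\text{U}}|s_n)\bigr)\,\mathrm{d}\overline{y}_{\text{U}}.
\end{align*}

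Concavity is then immediate: for each fixed $\overline{y}_{\text{U}}$ the integrand is the pointwise minimum of two linear functions of $(p_m, p_n)$, hence concave in $\mathbf{p}$; integration in $\overline{y}_{\text{U}}$ preserves concavity, and summing over the pairs $m<n$ does too, yielding concavity of $P^{\text{U}}_{\text{e,ub}}(\mathbf{p})$. The main obstacle is conceptual rather than computational: one must resist differentiating the erfc expression directly and instead notice that the $(m,n)$ and $(n,m)$ terms in the double sum conspire to produce the integral of a pointwise minimum. Once that pairing is spotted, the proof reduces to the standard fact that a minimum of affine functions is concave.
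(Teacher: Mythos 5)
Your proof is correct, and it takes a genuinely different route from the paper's. Both arguments rest on the same structural observation---that the $(m,n)$ and $(n,m)$ terms of the double sum in \eqref{PCS-BER} must be treated together, since (as you note) an individual summand $p_m\,\mathrm{erfc}(\cdot)$ need not be concave on its own---but after that pairing the paper proceeds by direct computation: it forms $g_{m,n}(\mathbf{p})=p_m\,\mathrm{erfc}(\cdot)+p_n\,\mathrm{erfc}(\cdot)$, evaluates the $2\times 2$ nonzero block of its Hessian, and shows $\mathbf{z}^T\mathbf{H}_{g,m,n}\mathbf{z}=\frac{2}{\sqrt{\pi}}\bigl(\frac{z_m}{p_m}-\frac{z_n}{p_n}\bigr)^2\Phi_{m,n}$ with $\Phi_{m,n}\le 0$, the sign resting on the algebraic identity $4\alpha\beta=1$, which forces $\exp\bigl(\theta_{m,n}^2-\theta_{n,m}^2\bigr)=p_m/p_n$. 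You instead undo Proposition~1, use the fact that $P^{\text{U}}_{m,n}$ is a probability under the conditional law $p(\cdot\,|\,s_m)$ to write $p_mP^{\text{U}}_{m,n}+p_nP^{\text{U}}_{n,m}=\int_{\mathbb{R}}\min\bigl(p_m p(\overline{y}_{\text{U}}|s_m),\,p_n p(\overline{y}_{\text{U}}|s_n)\bigr)\,\mathrm{d}\overline{y}_{\text{U}}$, and conclude from the textbook facts that a pointwise minimum of affine functions is concave and that nonnegative integration and summation preserve concavity. The partition claim is legitimate here because the two conditional densities are Gaussians with equal variance, so the equality set $A_{m,n}\cap A_{n,m}$ is the solution of a linear equation in $\overline{y}_{\text{U}}$, hence a single point. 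Your route is shorter, involves no differentiation (and therefore sidesteps the implicit restriction to the interior $p_m,p_n>0$ where $\log(p_m/p_n)$ and the divisions in the Hessian make sense), and generalizes immediately---to Proposition~3, and indeed to non-Gaussian or signal-dependent noise---since nothing in it uses the closed form of $P^{\text{U}}_{m,n}$. What the paper's calculation buys in exchange is the explicit quadratic form of the Hessian, which is mildly useful when reasoning about the curvature of the linearized constraint in the CCCP, but as a proof of concavity your argument is the cleaner one.
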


\begin{proof}
    The proof is given in Appendix B. 
\end{proof}
\noindent Due to \textbf{Proposition 2}, \eqref{constraint21} is not convex. As a result, \eqref{OptProb2} is a non-convex optimization problem, which generally renders solving the global solution very challenging. 

In this paper, the CCCP is employed to solve a sub-optimal solution \eqref{OptProb2}, which often exhibits a lower computational complexity. The principle idea of CCCP is to approximate the non-convex objective function and/or non-convex constraints to their corresponding ones. Then, a local optimum can be found via an iterative procedure. Specifically for our problem, due to its concavity, the BER expression in \eqref{PCS-BER} is upper bounded by its first-order linear expansion as 
\begin{align}
    P^{\text{B}}_{\text{e, ub}}(\mathbf{p}) \leq P^{\text{B}}_{\text{e, ub}}\left(\mathbf{p}^{(k)}\right) + \nabla^T P^{\text{B}}_{\text{e, ub}}(\mathbf{p})\left(\mathbf{p} - \mathbf{p}^{(k)}\right),
    \label{first-order-approximation}
\end{align}
where $\mathbf{p}^{(k)}$ is the solution to \eqref{OptProb2} at the $k$-th iteration. The $m$-th element of the gradient $\nabla P^{\text{B}}_{\text{e, ub}}(\mathbf{p})$ is in \eqref{gradient-BER}, which is on top of the next page. 
\begin{figure*}
\begin{align}
     \frac{\partial P^{\text{B}}_{\text{e, ub}}(\mathbf{p})}{\partial p_m}  & = \frac{1}{2\log_2M}\left(\frac{\partial\sum_{\substack{n = 1 \\ n\neq m}}^Mp_m\text{erfc}\left(\frac{2\sigma^2_{\rm{B}}\log\frac{p_m}{p_n} + d^2_{{\rm{B}}, m, n}}{2\sqrt{2}\sigma_{{\rm{B}}}|d_{{\rm{B}}, m, n}|}\right)}{\partial p_m} \right. + \left.\frac{\partial\sum_{\substack{n = 1 \\ n\neq m}}^Mp_n\text{erfc}\left(\frac{2\sigma^2_{\rm{B}}\log\frac{p_n}{p_m} + d^2_{{\rm{B}}, n, m}}{2\sqrt{2}\sigma_{{\rm{B}}}|d_{{\rm{B}}, n, m}|}\right)}{\partial p_m}\right) \nonumber \\
    & = \frac{1}{\log_2M}\left(\frac{1}{2}\sum_{\substack{n = 1 \\ n\neq m}}^M\text{erfc}\left(\frac{2\sigma^2_{\rm{B}}\log\frac{p_m}{p_n} + d^2_{{\rm{B}}, m, n}}{2\sqrt{2}\sigma_{{\rm{B}}}|d_{{\rm{B}}, m, n}|}\right) \right.
     \left.- \frac{\sigma_{\text{B}}}{\sqrt{2\pi}}\sum_{\substack{n = 1 \\ n\neq m}}^M\frac{1}{|d_{\text{B}, m, n}|}\text{exp}\left(-\left(\frac{2\sigma^2_{\rm{B}}\log\frac{p_m}{p_n} + d^2_{{\rm{B}}, m, n}}{2\sqrt{2}\sigma_{{\rm{B}}}|d_{{\rm{B}}, m, n}|}\right)^2\right) \right.\nonumber \\
    & \hspace{18mm} \left. +\frac{\sigma_{\text{B}}}{\sqrt{2\pi}}\sum_{\substack{n = 1 \\ n\neq m}}^M\frac{p_n}{p_m|d_{\text{B}, n, m}|}\text{exp}\left(-\left(\frac{2\sigma^2_{\rm{B}}\log\frac{p_n}{p_m} + d^2_{{\rm{B}}, n, m}}{2\sqrt{2}\sigma_{{\rm{B}}}|d_{{\rm{B}}, n, m}|}\right)^2\right)\right).
    \label{gradient-BER}
\end{align}
\rule{\textwidth}{0.4pt}
\end{figure*}
Replacing $ P^{\text{B}}_{\text{e, ub}}(\mathbf{p})$ by its upper bound in \eqref{first-order-approximation}, the CCP solve \eqref{OptProb2} by iteratively solving the following surrogate problem 
\begin{subequations}
\label{OptProb3}
    \begin{alignat}{2}
        &\underset{\mathbf{p}}{\text{maximize}} & \hspace{2mm} & C_{\text{s}}(\mathbf{p}) \label{obj3}\\
        &\text{subject to }  &  & \nonumber \\
        & & & P^{\text{B}}_{\text{e, ub}}\left(\mathbf{p}^{(k)}\right) + \nabla^T P^{\text{B}}_{\text{e, ub}}(\mathbf{p})\left(\mathbf{p} - \mathbf{p}^{(k)}\right) \leq P_{\text{e}}^{\text{pre}},
        \label{constraint31} \\ 
        & & & \eqref{constraint22},~\eqref{constraint11},~\eqref{constraint12}, \nonumber
    \end{alignat}
\end{subequations}
which is a convex optimization problem and thus can be efficiently solved using interior-point methods. 
A solving algorithm using CCCP is summarized as follows. 
\begin{algorithm2e}[http]
\SetAlgoLined 
\caption{CCCP algorithm to solve \eqref{OptProb2}}
\label{alg.1}
Set $L_{\text{max}}$ and  $\epsilon > 0$ as the maximum number of iterations and the error tolerance. \\
Choose a  starting point ${\mathbf{p}}^{(0)}$ for \eqref{OptProb2} and 
set $k \leftarrow 1$. \\
\While{convergence = \textbf{False} and $k \leq L_{\text{max}}$}{
Solve \eqref{OptProb3} using $\mathbf{p}^{(k-1)}$  from the ($k$-1)-th iteration.\\
\eIf{$\left|\frac{ C_{\text{s}}\left(\mathbf{p}^{(k)}\right) - C_{\text{s}}\left({\mathbf{p}}^{(k-1)}\right)}{ C_{\text{s}}\left({\mathbf{p}}^{(k-1)}\right)}\right| \leq \epsilon$ }{
convergence = \textbf{True}. \\
$\mathbf{p}^{*} \leftarrow {\mathbf{p}}^{(k)}$. \\
}
{convergence $\leftarrow$ \textbf{False}.\\}
$k \leftarrow k + 1$. \\
}
Return the solution $\mathbf{p}^{*}$.
\end{algorithm2e} 
\subsection{Unknown Eve's CSI}
In most practical scenarios, Eve is a passive unauthorized user, who does not actively exchange information with the transmitter. In such cases, the CSI of Eve's channel may not be available by the transmitter; thus, it is not valid to use the secrecy capacity formula in \eqref{secrecy_capacity} for the PCS design as in the case of known Eve's CSI. 
When the instantaneous CSI of Eve's channel is unknown, one can consider the average secrecy capacity with respect to Eve's channel, which can be given by
\begin{align}
  \overline{C}_s(\mathbf{p}) =  & -\int_{-\infty}^\infty p(\overline{y}_{\text{B}})\log_2 p(\overline{y}_{\text{B}})\text{d}\overline{y}_{{\text{B}}} - \frac{1}{2}\log_2\left(2\pi e\sigma^2_{\text{B}}\right)\nonumber \\ 
   & + \mathbb{E}_{h_{\text{E}}}\left[\int_{-\infty}^\infty p(\overline{y}_{\text{E}})\log_2 p(\overline{y}_{\text{E}})\text{d}\overline{y}_{{\text{E}}}  + \frac{1}{2}\log_2\left(2\pi e\sigma^2_{\text{E}}\right)\right].
   \label{avg_secrecy_capacity}
\end{align}
An issue of the above expression is the average of the differential entropy of $\overline{y}_{\text{E}}$, which does not lead to a tractable form. Therefore, we consider an alternative approach using an upper bound on the capacity of Eve's channel. Recall from \eqref{mutual_information} that the capacity of Eve's channel is given by
\begin{align}
    C_{\text{E}} = \mathbb{H}(\overline{y}_{\text{E}}) - \mathbb{H}(n_{\text{E}}) = \mathbb{H}(h_{\text{E}}\gamma\eta s + n_{\text{E}}) - \frac{1}{2}\log_2\left(2\pi e\sigma^2_{\text{E}}\right).
\end{align}
Let us consider the random variable $z = h_{\text{E}}\gamma\eta s$ whose expected value is $\mu_z = h_{\text{E}}\gamma\eta\mathbf{a}^T\mathbf{p}$. Since $z$ is bounded between $-h_{\text{E}}\gamma\eta A$ and $h_{\text{E}}\gamma\eta A$, according to the Bhatia–Davis inequality \cite{Rajendra2000}, the variance of $z$ is bounded by $\sigma^2_{z} \leq \left(h_{\text{E}}\gamma\eta\right)^2\left(A^2 - \mathbf{a}^T\mathbf{p}\mathbf{p}^T\mathbf{a}\right)$. Following a well-known result that $\mathbb{H}(x) \leq \frac{1}{2}\log_2\left(2\pi e \sigma^2_x\right)$ (here, $\sigma^2_x$ is the variance of the random variable $x$), an upper bound on $C_{\text{E}}$ is thus
\begin{align}
    C_{\text{E}} \leq \frac{1}{2}\log_2\left(1 + \frac{\left(h_{\text{E}}\gamma\eta\right)^2\left(A^2 - \mathbf{a}^T\mathbf{p}\mathbf{p}^T\mathbf{a}\right)}{\sigma^2_{\text{E}}}\right),
\end{align}
which results in an average lower bound on the secrecy capacity as 
\begin{align}
    \overline{C}^L_s(\mathbf{p}) = & -\int_{-\infty}^\infty p(\overline{y}_{\text{B}})\log_2 p(\overline{y}_{\text{B}})\text{d}\overline{y}_{{\text{B}}} - \frac{1}{2}\log_2\left(2\pi e\sigma^2_{\text{B}}\right)\nonumber \\ 
   & -\frac{1}{2}\mathbb{E}_{h_{\text{E}}}\left[\log_2\left(1 + \frac{\left(h_{\text{E}}\gamma\eta\right)^2\left(A^2 - \mathbf{a}^T\mathbf{p}\mathbf{p}^T\mathbf{a}\right)}{\sigma^2_{\text{E}}}\right)\right].
   \label{lower-bound}
\end{align}
The above expression is, however, still difficult to handle due to the intractability of the average term. Here, we present the following tractable estimation for \eqref{lower-bound} as
\begin{align}
    \widetilde{C}^L_s(\mathbf{p}) = & -\int_{-\infty}^\infty p(\overline{y}_{\text{B}})\log_2 p(\overline{y}_{\text{B}})\text{d}\overline{y}_{{\text{B}}} - \frac{1}{2}\log_2\left(2\pi e\sigma^2_{\text{B}}\right)\nonumber \\ 
   & -\frac{1}{2}\log_2\left(1 + \frac{\left(\overline{h}_{\text{E}}\gamma\eta\right)^2\left(A^2 - \mathbf{a}^T\mathbf{p}\mathbf{p}^T\mathbf{a}\right)}{\overline{\sigma}^2_{\text{E}}}\right),
   \label{lower-bound-estimation}
\end{align}
with $\overline{h}_{\text{E}}$ being the average gain of Eve's channel and $\overline{\sigma}^2_{\text{E}}$ being the average noise power, which is given by
\begin{align}
\overline{\sigma}^2_\text{E} = B_{\text{mod}}\left(2e\gamma \overline{h}_{\text{E}}P_{\text{t}} + 4\pi e \gamma A_r\chi_{\text{amb}}\left(1-\cos\Psi_{\text{E}}\right) + i^2_{\text{amp}}\right).
\end{align}

To calculate $\overline{h}_{\text{E}}$, we assume that Eve's position is uniformly distributed within the receiver plane. Denoting $r_{\text{E}}$ as the distance from Eve to the center of the beam footprint. According to \eqref{channel_gain}, the gain of Eve's channel is $0$ when $\psi_{\text{E}} \geq \Psi_{\text{E}}$, or equivalently $\cos\left(\psi_{\text{E}}\right) = \frac{L}{\sqrt{L^2 + r^2_{\text{E}}}} \leq \cos\left(\Psi_{\text{E}}\right)$, which leads to $r_{\text{E}} \geq L\tan\left(\Psi_{\text{E}}\right)$. To calculate the average gain of Eve's channel, one can reformulate \eqref{channel_gain} to 
\begin{align}
    h_\text{E} =
  \begin{cases}
    \Xi(\psi_{\text{E}})L^{l+1}\left(L^2 + r^2_{\text{E}}\right)^{-\frac{l+3}{2}}     & \quad 0 \leq \psi_{\text{E}} \leq \Psi_{\text{E}},\\
    0  & \quad \psi_{\text{E}} > \Psi_{\text{E}}.
  \end{cases}
  \label{channel_gain_reformulation}
\end{align}
where $\Xi(\psi_{\text{E}}) = \frac{A_{\text{r}}(l+1)}{2\pi}T(\psi_{\text{E}})g(\psi_{\text{E}})$.
The average gain of Eve's channel is thus given by 
\begin{align}
    \overline{h}_{\text{E}} = \frac {\Xi(\psi_{\text{E}})L^{l}}{\tan\left(\Psi_{\text{E}}\right)}\int_{0}^{L\tan\left(\Psi_{\text{E}}\right)}\left(L^2 + r^2_{\text{E}}\right)^{-\frac{l+3}{2}}dr_{\text{E}}.
\end{align}
The above integration generally does not lead to a simple closed-form solution. Instead, it can be evaluated in terms of the hypergeometric function ${}_pF_q(\cdot;\cdot;\cdot)$\cite{Abramowitz:1964:HMF} as 
\begin{align}
    \overline{h}_{\text{E}} =\Xi(\psi_{\text{E}})L^{-2}~  {}_2F_1\left(\frac{1}{2}, \frac{l+3}{2}; \frac{3}{2}; -\tan^2\left(\Psi_{\text{E}}\right)\right).
    \label{avg_hE_hypergeometric}
\end{align}
Note that for a special value of $l = 1$ (corresponding to the LED's semiangle at half power $\Phi_{1/2} = 60^\circ$), the expression in \eqref{avg_hE_hypergeometric} reduces to the following simple form
\begin{align}
    \overline{h}_{\text{E}} = \frac{\Xi\left(\psi_{\text{E}}\right)\left(\sin\left(2\Psi_{\text{E}}\right) + 2\Psi_{\text{E}}\right)}{4L^2\tan\left(\Psi_{\text{E}}\right)}.
\end{align}

Using \eqref{lower-bound-estimation}, the PCS design problem is formulated as 
\begin{subequations}
\label{OptProb5}
    \begin{alignat}{2}
        &\underset{\mathbf{p}}{\text{maximize}} & \hspace{2mm} & \widetilde{C}^L_s(\mathbf{p}) \label{obj5}\\
        &\text{subject to }  &  & \nonumber \\
         & & & \eqref{constraint21},~\eqref{constraint22},~ \eqref{constraint11},~ \eqref{constraint12} \nonumber.
    \end{alignat}
\end{subequations}
In contrast to objective functions of problems \eqref{OptProb2}, the objective function in \eqref{obj5} appears to be nonconcave. To handle the above problem, we first introduce the variable transformation $t = \mathbf{a}^T\mathbf{p}\mathbf{p}^T\mathbf{a}$ and rewrite \eqref{lower-bound-estimation} to
\begin{align}
    \widetilde{C}^L_s(\mathbf{p}, t) = & -\int_{-\infty}^\infty p(\overline{y}_{\text{B}})\log_2 p(\overline{y}_{\text{B}})\text{d}\overline{y}_{{\text{B}}} - \frac{1}{2}\log_2\left(2\pi e\sigma^2_{\text{B}}\right)\nonumber \\ 
   & -\frac{1}{2}\log_2\left(1 + \frac{\left(\overline{h}_{\text{E}}\gamma\eta\right)^2\left(A^2 - t\right)}{\overline{\sigma}^2_{\text{E}}}\right).
   \label{lower-bound-estimation_1}
\end{align}
The problem in \eqref{OptProb5} is then reformulated as 
\begin{subequations}
\label{OptProb6}
    \begin{alignat}{2}
        &\underset{\mathbf{p},~t}{\text{maximize}} & \hspace{2mm} & \widetilde{C}^L_s(\mathbf{p}, t) \label{obj6}\\
        &\text{subject to }  &  & \nonumber \\
        & & & t \leq \mathbf{a}^T\mathbf{p}\mathbf{p}^T\mathbf{a}, \label{constraint61} \\
         & & & \eqref{constraint21},~\eqref{constraint22},~ \eqref{constraint11},~ \eqref{constraint12} \nonumber.
    \end{alignat}
\end{subequations}
We note that the objective function in \eqref{obj6} is not concave and the constraint in \eqref{constraint61} is not convex. The CCCP, therefore, is utilized to solve a local solution to \eqref{OptProb6}. Specifically, the objective function and the right-hand side of \eqref{constraint61} are replaced by their respective linear lower bounds 
\begin{align}
    &\widetilde{C}^L_s(\mathbf{p}, t) \geq \widetilde{C}^L_s(\mathbf{p}, t, t^{(k)}) =   -\int_{-\infty}^\infty p(\overline{y}_{\text{B}})\log_2 p(\overline{y}_{\text{B}})\text{d}\overline{y}_{{\text{B}}} \nonumber \\ &  - \frac{1}{2}\log_2\left(2\pi e\sigma^2_{\text{B}}\right)
    -\frac{1}{2}\log_2\left(1 + \frac{\left(\overline{h}_{\text{E}}\gamma\eta\right)^2\left(A^2 - t^{(k)}\right)}{\overline{\sigma}^2_{\text{E}}}\right) \nonumber \\
    & + \frac{1}{2\log(2)\left(1 + \frac{\left(\overline{h}_{\text{E}}\gamma\eta\right)^2\left(A^2 - t^{(k)}\right)}{\overline{\sigma}^2_{\text{E}}}\right)}\left(t - t^{(k)}\right),
\end{align}
and
\begin{align} \mathbf{a}^T\mathbf{p}^{(k)}\left[\mathbf{p}^{(k)}\right]^T\mathbf{a} + 2\mathbf{a}^T\mathbf{p}^{(k)}\mathbf{a}^T\left(\mathbf{p} - 
    \mathbf{p}^{(k)}\right) \leq \mathbf{a}^T\mathbf{p}\mathbf{p}^T\mathbf{p},
\end{align}
where $t^{(k)}$ and $\mathbf{p}^{(k)}$ are the solutions to \eqref{OptProb6} at the $k$-th iteration of the CCCP. A similar CCCP algorithm to that described in \textbf{Algorithm 1} is then used to find a local maximum to \eqref{OptProb6}.

Now, we examine the design considering the constraint of symmetric constellation distribution. Notice that once the constraint is imposed, the lower-bound secrecy capacity in \eqref{lower-bound} becomes 
\begin{align}
   \overline{C}^L_s(\mathbf{p}) = & -\int_{-\infty}^\infty p(\overline{y}_{\text{B}})\log_2 p(\overline{y}_{\text{B}})\text{d}\overline{y}_{{\text{B}}} - \frac{1}{2}\log_2\left(2\pi e\sigma^2_{\text{B}}\right)\nonumber \\ 
   & -\frac{1}{2}\mathbb{E}_{h_{\text{E}}}\left[\log_2\left(1 + \frac{\left(h_{\text{E}}\gamma\eta A\right)^2}{\sigma^2_{\text{E}}}\right)\right],
   \label{lower-bound-1} 
\end{align}
which is a concave function of $\mathbf{p}$ due to the concavity of the differential entropy of $\overline{y}_{\text{B}}$ \cite[\textbf{Theorem 2.7.3}]{Cover2006}. Since the second and third terms of $\eqref{lower-bound-1}$ are constant, the optimal PCS design problem can be formulated as 
\begin{subequations}
\label{OptProb7}
    \begin{alignat}{2}
        &\underset{\mathbf{p}}{\text{minimize}} & \hspace{2mm} & \int_{-\infty}^\infty p(\overline{y}_{\text{B}})\log_2 p(\overline{y}_{\text{B}})\text{d}\overline{y}_{{\text{B}}} \label{obj7}\\
        &\text{subject to }  &  & \nonumber \\
         & & & \eqref{symetry-constraint},~\eqref{constraint21},~ \eqref{constraint11},~ \eqref{constraint12} \nonumber,
    \end{alignat}
\end{subequations}
which is a convex optimization problem. 
\vspace{-2mm}
\subsection{QoS-based PCS Design}
Aside from maximizing the secrecy capacity, physical layer security designs from the QoS-based perspective have also been addressed in literature \cite{Liao2010, Valeria2024} where QoS in terms of the BER (or equivalently the signal-to-noise ratio) has been considered. In this work, we examine a QoS-based PCS design that aims to maximize the BER of Eve's channel.  Using the approximate SER in \eqref{approximate-SER} as a reasonable evaluation for the SER of Eve's channel, the problem is formulated as   
\begin{subequations}
\label{OptProb8}
    \begin{alignat}{2}
        &\underset{\mathbf{p}}{\text{maximize}} & \hspace{2mm} & P^{\text{E}}_{\text{e, ap}}(\mathbf{p}) \label{obj8}\\
        &\text{subject to }  &  & \nonumber \\
        & & & \eqref{constraint21},~\eqref{constraint22}, ~\eqref{constraint11},~ \eqref{constraint12} \nonumber,
    \end{alignat}
\end{subequations}
where $P^{\text{E}}_{\text{e, ap}}(\mathbf{p}) \approx \frac{1}{\log_2(M)}P^{\text{E}}_{\text{s, ap}}(\mathbf{p})$.
\begin{Proposition}
$P^{\rm{U}}_{\rm{e, ap}}(\mathbf{p})$ is a concave function of $\mathbf{p}$.
\end{Proposition}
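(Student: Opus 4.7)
The plan is to reduce Proposition~3 to the pairwise concavity argument that is already at the heart of the proof of Proposition~2 in Appendix~B. The starting point is to rewrite $P^{\rm U}_{\rm e, ap}(\mathbf p)$ in \eqref{approximate-SER} as a sum over adjacent index pairs rather than a double sum indexed by $m$. Expanding the two inner ${\rm erfc}$ terms of \eqref{approximate-SER} and reindexing the one involving $p_{m-1}$, the boundary contributions with $p_0 = p_{M+1} = 0$ vanish, since under the conventions $d_{{\rm U},1,0} = \infty$ and $d_{{\rm U},M,M+1} = -\infty$ the corresponding ${\rm erfc}$ arguments diverge to $+\infty$. After scaling by $\tfrac{1}{\log_2 M}$ one obtains
\begin{equation*}
  P^{\rm U}_{\rm e, ap}(\mathbf p) = \frac{1}{2\log_2 M}\sum_{m=1}^{M-1} g_{m,m+1}(p_m, p_{m+1}),
\end{equation*}
where
\begin{equation*}
  g_{m,m+1}(p_m, p_{m+1}) = p_m\,{\rm erfc}\!\left(\frac{2\sigma^2_{\rm U}\log\tfrac{p_m}{p_{m+1}} + d^2_{{\rm U},m,m+1}}{2\sqrt{2}\sigma_{\rm U}|d_{{\rm U},m,m+1}|}\right) + p_{m+1}\,{\rm erfc}\!\left(\frac{2\sigma^2_{\rm U}\log\tfrac{p_{m+1}}{p_m} + d^2_{{\rm U},m+1,m}}{2\sqrt{2}\sigma_{\rm U}|d_{{\rm U},m+1,m}|}\right).
\end{equation*}

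The key observation is that each $g_{m,m+1}$ is exactly one of the paired building blocks whose concavity on the positive orthant is established inside the proof of Proposition~2. Indeed, the full upper-bound BER in \eqref{PCS-BER} decomposes, after grouping each ordered pair $(m,n)$ with its mirror $(n,m)$, into a sum over unordered pairs $\{m,n\}$ of terms of the form $p_m\,{\rm erfc}(\cdot_{m,n}) + p_n\,{\rm erfc}(\cdot_{n,m})$, and Proposition~2 proves each such pair term concave in $(p_m, p_n)$. The approximate BER retains only the subset of these terms with $|m-n|=1$. Since a function that is concave on a two-dimensional coordinate slice remains concave when lifted to $\mathbb{R}^M$ by treating the remaining coordinates as constants, every $g_{m,m+1}$ is a concave function of $\mathbf p$.

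Concavity is preserved under nonnegative linear combinations, so summing the concave $g_{m,m+1}$ over $m=1,\ldots,M-1$ and scaling by $\tfrac{1}{2\log_2 M}$ yields a concave function of $\mathbf p$, as claimed. The main obstacle is not analytical but bookkeeping: one must verify that the adjacent-pair regrouping is exact, that the boundary conventions indeed annihilate the edge terms, and that each pair $\{m,m+1\}$ is counted exactly once, so that the pairwise concavity from Appendix~B can be invoked verbatim without repeating the underlying Hessian-type computation.
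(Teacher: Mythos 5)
Your reduction is correct and is essentially the argument the paper itself intends: the authors omit the proof of Proposition~3 precisely because, as you show, $P^{\rm U}_{\rm e,ap}(\mathbf p)$ regroups into the adjacent-pair terms $g_{m,m+1}$ whose concavity is established in the Hessian computation of Appendix~B, the boundary terms vanishing under the stated conventions. Nothing further is needed.
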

\noindent The proof is similar to that of \textbf{Proposition 2}, thus is omitted for brevity. Due to the concavity of $P^{\rm{E}}_{\rm{e, ap}}(\mathbf{p})$, we can handle \eqref{OptProb8} using the same approach to solving \eqref{OptProb2}. 
\section{Simulation Results and Discussions}
This section presents the secrecy capacity and BER performance of the proposed PCS designs in comparison with the uniform PAM signaling.
It is assumed that $I_{\text{max}} \gg I_{\text{DC}}$ and $I_{\text{min}} = 0$, resulting in $\text{min}\left(I_{\text{max}} - I_{\text{DC}}, I_{\text{DC}} - I_{\text{min}}\right) = I_{\text{DC}}$. 
We then denote $P_{t, \text{DC}} = \eta I_{\text{DC}}$, which is the average emitted optical power (i.e., $\mathbb{E}[P_t]$) in the case of uniform symbol probability. Note that in the case of PCS-PAM, $P_{t, \text{DC}}$ can also be well approximated by $\mathbb{E}[P_t]$ since a very small $\alpha$ (i.e., $\alpha = 0.01$ as in our simulations) should be chosen to minimize the flickering effect. Other simulation parameters are given as follows. 
Height of the LED $L= 3$ m, LED semi-angle at half power $\Phi_{1/2} = 60^\circ$, LED conversion factor $\eta = 0.44$ W/A, PD active area, $A_r =  1~\text{cm}^2$, PD responsivity $\gamma = 0.54 $ A/W,  PD field of view (FoV) $\Psi_{\text{U}} = 70^\circ$, optical filter gain $T_s(\psi_{\text{U}}) =$1, refractive index of the concentrator $\kappa = 1.5$, modulation bandwidth $B = 20$ MHz, ambient light photocurrent $\chi_{\text{amb}} = 10.93$ $\text{A}/(\text{m}^2\cdot\text{Sr})$, pre-amplifier noise density current $i_{\text{amb}} = 5~\text{pA/Hz}^{-1/2}$. 

\begin{figure}[ht]
     \centering
     \includegraphics[height = 6cm, width = 0.48\textwidth]{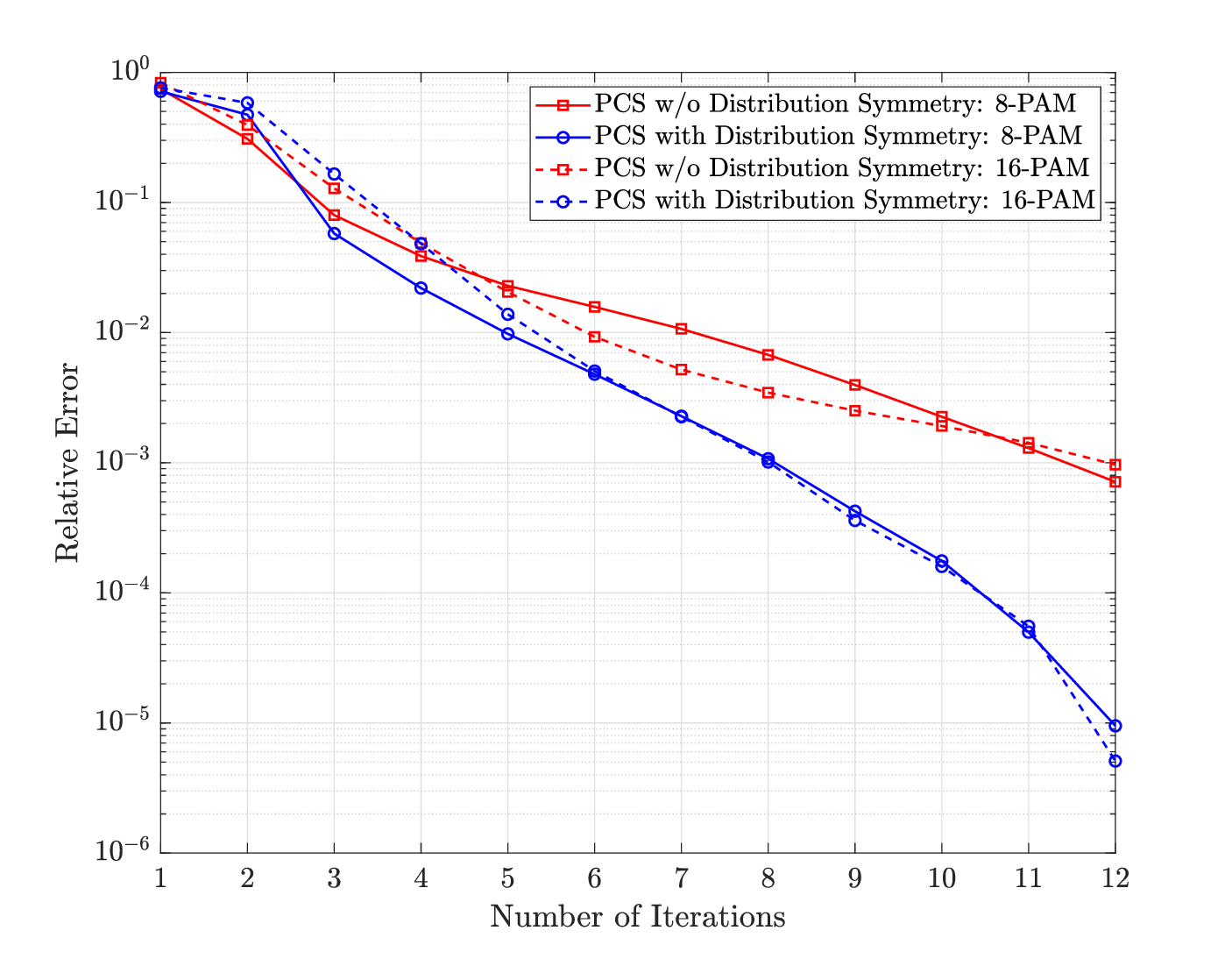}
     \caption{Convergence behavior of \textbf{Algorithm 1}.}
     \label{Convergence}
\end{figure}
Firstly, we show in Fig.~\ref{Convergence} the convergence behavior of \textbf{Algorithm 1} with $P_{t, \text{DC}} = 25$ dBm being set. Here, the relative error of the objective value is displayed in accordance with the number of iterations of the algorithm for the PCS designs of 8-PAM and 16-PAM constellations with and without the constraint of distribution symmetry. Again, we emphasize that in the PCS design without distribution symmetry constraint, the flickering mitigation constraint is employed, while in the design with distribution symmetry constraint, the flickering mitigation constraint is not used. Note that since CCCP is a heuristic method, the obtained local solution may depend on the starting point. Therefore, we run the algorithm with 10000 different starting points to obtain the average results. At the target error tolerance  $\epsilon = 10^{-2}$, less than 7 iterations are required on average. It is also observed that the PCS designs with the distribution symmetry constraint (which is more stringent than the flickering mitigation one) generally converge quicker than those without it. 

\begin{figure*}
     \centering
     \begin{subfigure}[b]{0.49\textwidth}
         \centering
         \includegraphics[height = 5.8cm, width = \textwidth]{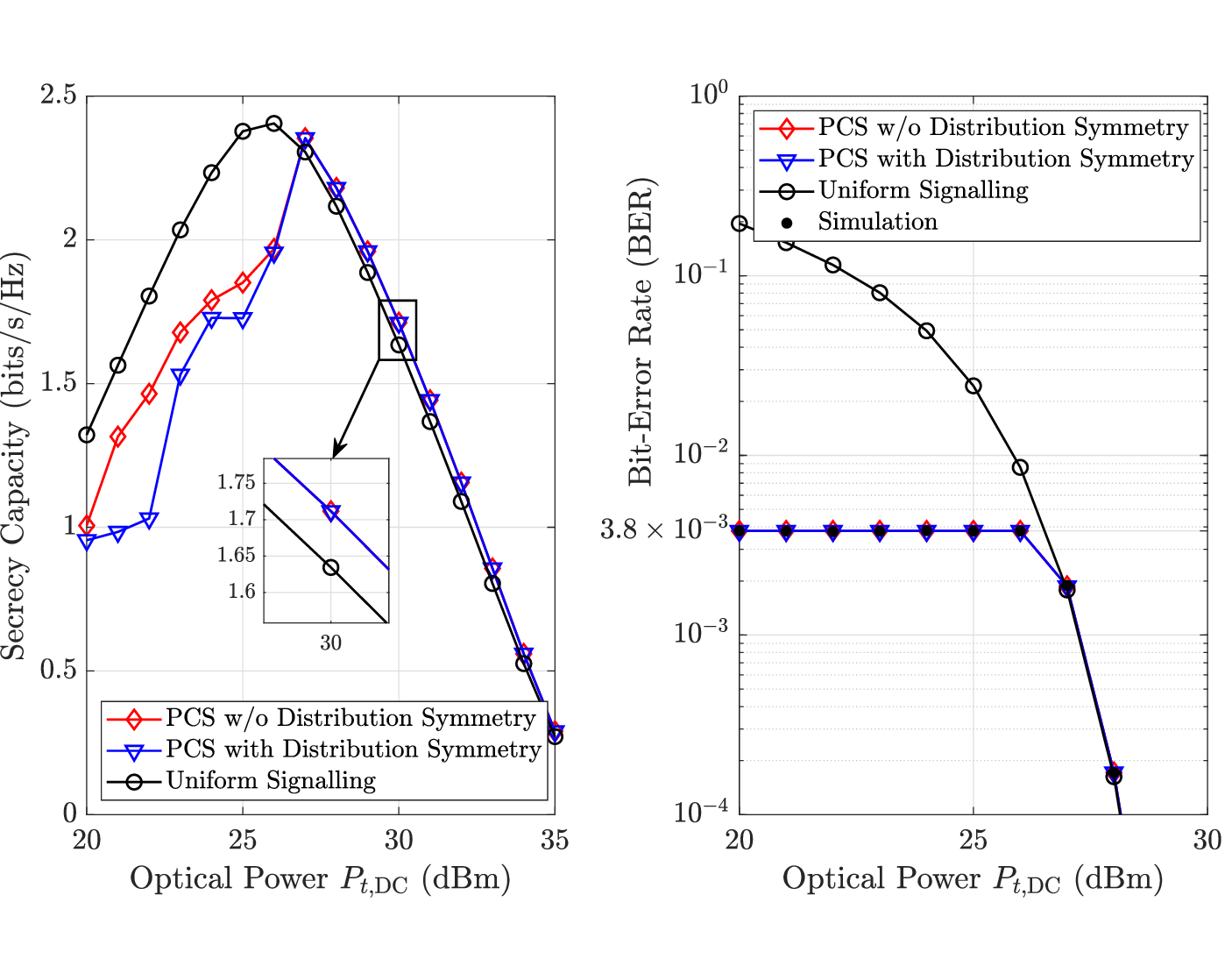}
         \caption{8-PAM.}
         \label{8PAM_Comparison_KnownEve}
     \end{subfigure}
     \hfill
     \begin{subfigure}[b]{0.49\textwidth}
         \centering
         \includegraphics[height = 5.8cm, width = \textwidth]{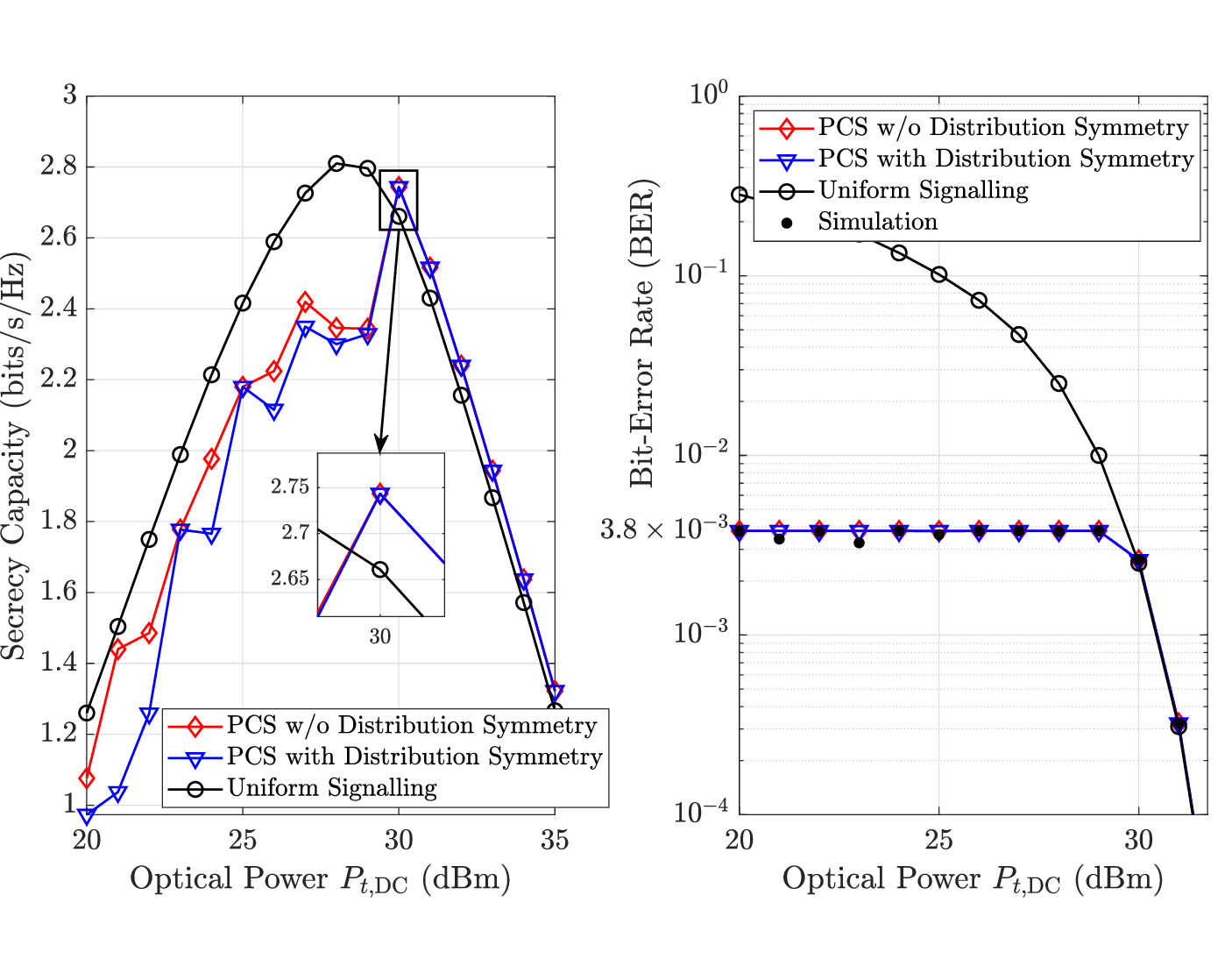}
         \caption{16-PAM.}
         \label{16PAM_Comparison_KnownEve}
     \end{subfigure}
     \hfill
     \caption{Comparison between PCS designs and uniform signaling: Known Eve's CSI.}
     \centering
     \begin{subfigure}[b]{0.49\textwidth}
         \centering
         \includegraphics[width=.9\textwidth, height = 8cm]{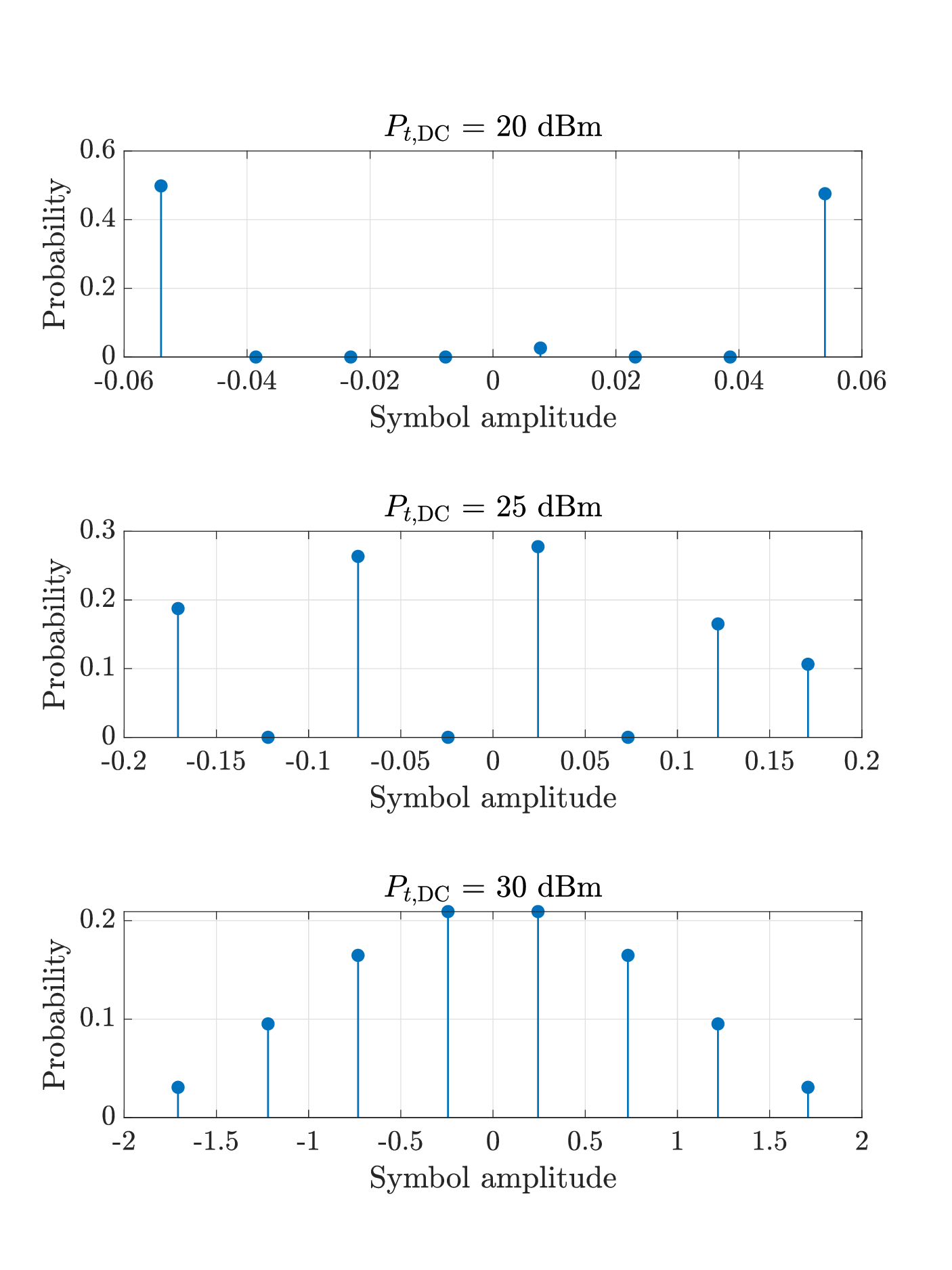}
         \vspace{-5mm}
         \caption{PCS design without distribution symmetry constraint.}
         \label{distribution_PCS}
     \end{subfigure}
     \hfill
     \begin{subfigure}[b]{0.49\textwidth}
         \centering
         \includegraphics[width=.9\textwidth,height = 8cm]{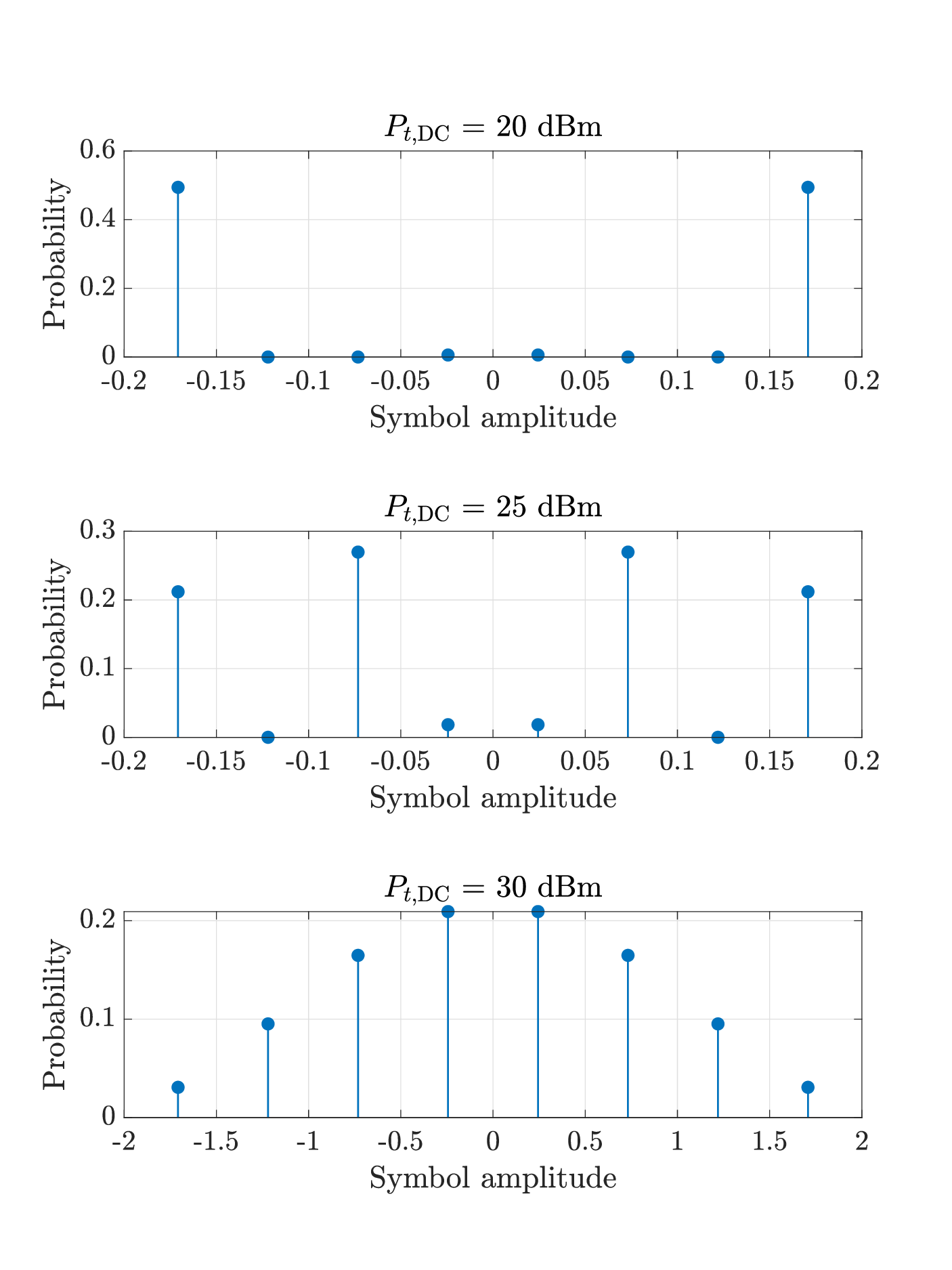}
         \vspace{-5mm}
         \caption{PCS design with distribution symmetry constraint.}
         \label{distribution_PCS_wo}
     \end{subfigure}
     \hfill
     \caption{Optimal constellation distributions of 8-PAM PCS designs: Known Eve's CSI.}
\end{figure*}

For the case of known Eve's CSI, comparisons in terms of the secrecy capacity and BER between PCS designs and the uniform signaling are presented in Fig.~\ref{8PAM_Comparison_KnownEve} and \ref{16PAM_Comparison_KnownEve} for 8-PAM and 16-PAM, respectively. PCS designs with and without the requirement of distribution symmetry are both examined. To guarantee a positive secrecy capacity, it is assumed that $\frac{h_{\text{B}}}{\sigma_{\text{B}}} = 10\frac{h_{\text{E}}}{\sigma_{\text{E}}}$ is chosen. 
Monte-Carlo simulations are also performed to verify the validity of the approximate BER expression in \eqref{PCS-BER}. In the case of uniform signaling, the BER is given by the exact closed-form expression in \cite{Kyongkuk2002}. 
It is clearly shown that there exists a critical value of the optical power $P_{t, \text{DC}}$ (approximately 26.5 dBm and 29.5 dBm for 8-PAM and 16-PAM, respectively) below which the secrecy capacities of the PCS designs is lower than that of the uniform signaling. Obviously, this lower secrecy performance of the proposed PCS designs is due to the introduction of the BER, flickering, and distribution symmetry constraints, which results in a more stringent search space for the optimal distribution $\mathbf{p}$ (at least, when the optical power is not sufficiently high). 
However, it would be emphasized that the uniform signaling violates the pre-FEC BER constraint (the common pre-FEC BER threshold $P_{\text{e}}^{\text{pre}} = 3.8 \times 10^{-3}$ is set \cite{Jian2024}) when $P_{t, \text{DC}}$ is below its critical value. In contrast, the proposed designs effectively limit the BER below the targeted threshold in the entire range of the optical power. When $P_{t, \text{DC}}$ exceeds the critical value, while both uniform signaling and PCS designs satisfy the BER constraint, the latter achieves a better secrecy capacity. For instance,  4.73\% and 3.11\% improvements are attained at $P_{t, \text{DC}} = 30$ dBm for 8-PAM and 16-PAM, respectively.   
As already mentioned, practical VLC systems are subject to a maximum allowable emitted optical power (e.g., due to hardware constraints or to comply with eye safety regulations) and should be able to support dimming control. In other words, the systems should be able to operate under arbitrary adjustment of the optical power within a practical range (e.g., 20 to 35 dBm in our simulations) to meet the required illumination. Hence, the proposed PCS designs with a BER constraint are imperative to guarantee the communication reliability of the legitimate user's channel given any possible practical value of the optical power. 

The optimal 8-PAM symbol distributions of the PCS designs with and without distribution symmetry at $P_{t, \text{DC}} = 20$, $25$, $30$ dBm are shown in Figs.~\ref{distribution_PCS} and \ref{distribution_PCS_wo}, respectively.
As the optical power decreases, it is seen that the number of inactive symbols (i.e., symbols with zero transmission probability) increases, and non-zero transmission probability symbols tend to be far apart. This behavior is the result of increasing symbol susceptibility to noise as the optical power decreases. Hence, to meet the BER constraint, the transmitted symbols (i.e., non-zero transmission probability) should be placed as far as possible to reduce the effect of noise. Note that when the the BER constraint is relaxed, (i.e., when the optical power is sufficiently high, e.g.,  $P_{t, \text{DC}} = 30$ dBm),  designs with and without distribution symmetry result in the same optimal constellation distribution, which resembles a Gaussian-like distribution.  

The performances of PCS designs and uniform signaling for the case of unknown Eve's CSI are shown in Figs.~\ref{8PAM_Comparison_UnknownEve} and \ref{16PAM_Comparison_UnknownEve} where the same behavior as in the case of known Eve's CSI is observed. It is noted, however, that when $P_{t, \text{DC}}$ is beyond its critical value, the average secrecy capacities of PCS designs and uniform signaling are almost the same. Nonetheless, it should be noted that the presented average secrecy capacities of PCS designs are the lower bound of the actual ones, which is given in \eqref{lower-bound}. As also observed in the case of known Eve's CSI, imposing distribution symmetry to the PCS design results in a worse secrecy capacity, especially in the low optical power region. This is because the distribution symmetry constraint, which implies no flickering as a byproduct, is more stringent than that of the flickering mitigation.
Regarding the optimal constellation in this case, it is observed in Fig.~\ref{distribution_PCS_UnknownEve_wo_symmetry} and \ref{distribution_PCS_UnknownEve_with_symmetry} that while the number of inactive symbols also decreases as the optical power decreases, unlike the case of known Eve's CSI, at $P_{t, \text{DC}} = 30$ dBm, instead of a Gaussian-like distribution, the constellation distributions appear to be more uniform. This explains the negligible difference in the secrecy capacity between the uniform signaling and the PCS designs. 
\begin{figure*}
     \centering
     \begin{subfigure}[b]{0.49\textwidth}
         \centering
         \includegraphics[height = 5.8cm, width = \textwidth]{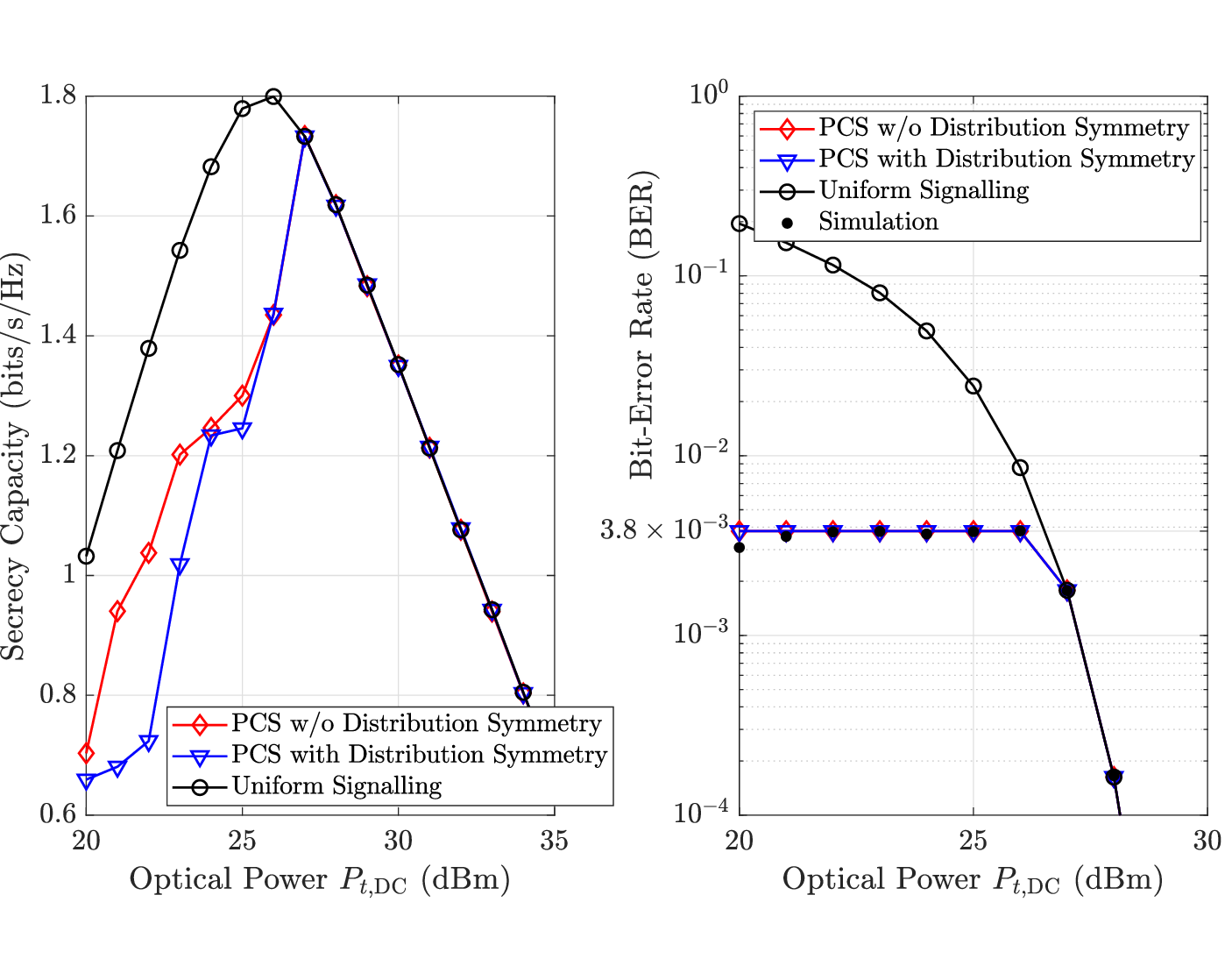}
         \caption{8-PAM.}
         \label{8PAM_Comparison_UnknownEve}
     \end{subfigure}
     \hfill
     \begin{subfigure}[b]{0.49\textwidth}
         \centering
         \includegraphics[height = 5.8cm, width = \textwidth]{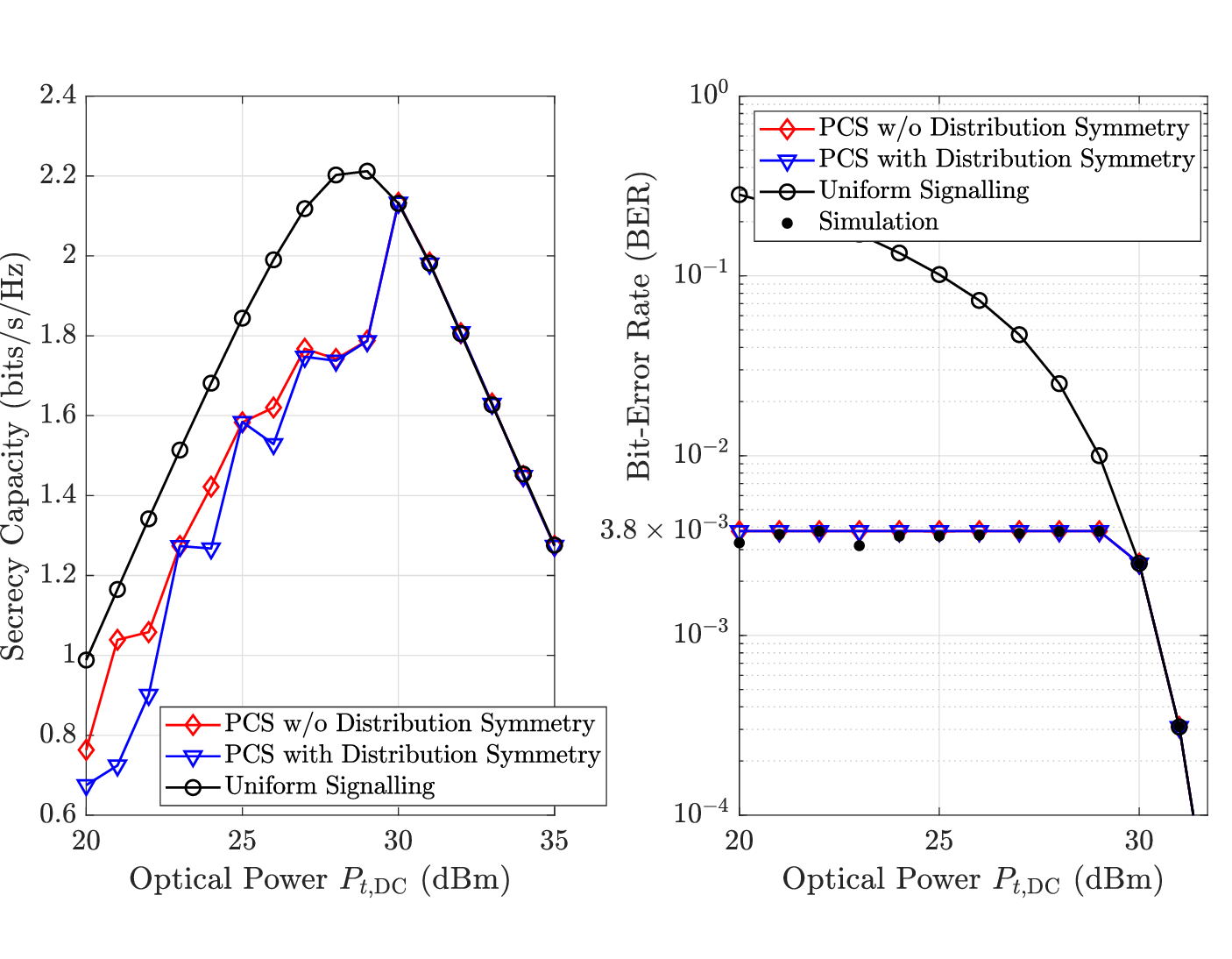}
         \caption{16-PAM.}
         \label{16PAM_Comparison_UnknownEve}
     \end{subfigure}
     \hfill
     \caption{Comparison between PCS designs and uniform signaling: Unknown Eve's CSI.}
     \centering
     \begin{subfigure}[b]{0.49\textwidth}
         \centering
         \includegraphics[width=.9\textwidth, height = 8cm]{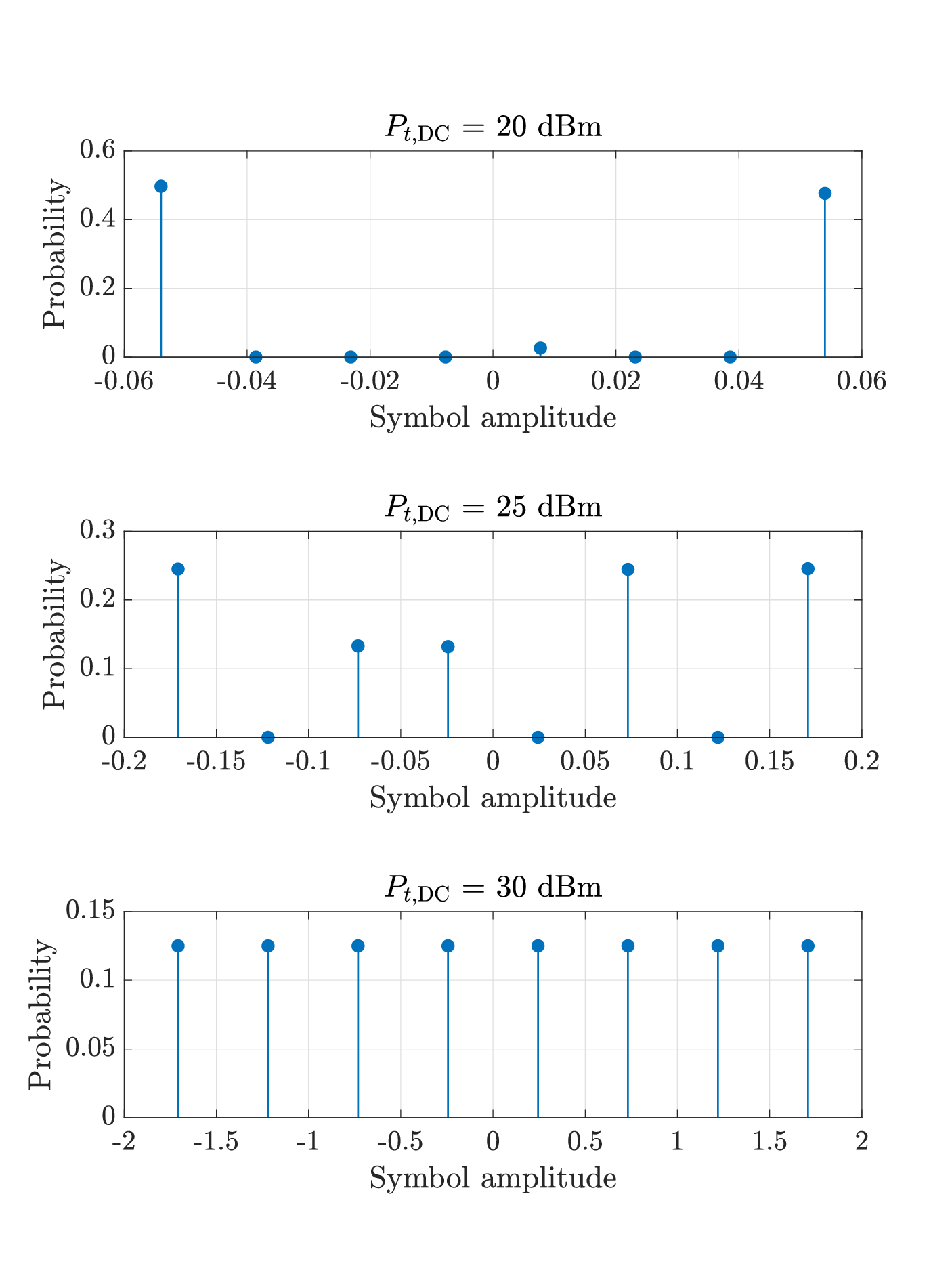}
         \vspace{-5mm}
         \caption{PCS design without distribution symmetry constraint.}
         \label{distribution_PCS_UnknownEve_with_symmetry}
     \end{subfigure}
     \hfill
     \begin{subfigure}[b]{0.49\textwidth}
         \centering
         \includegraphics[width=.9\textwidth,height = 8cm]{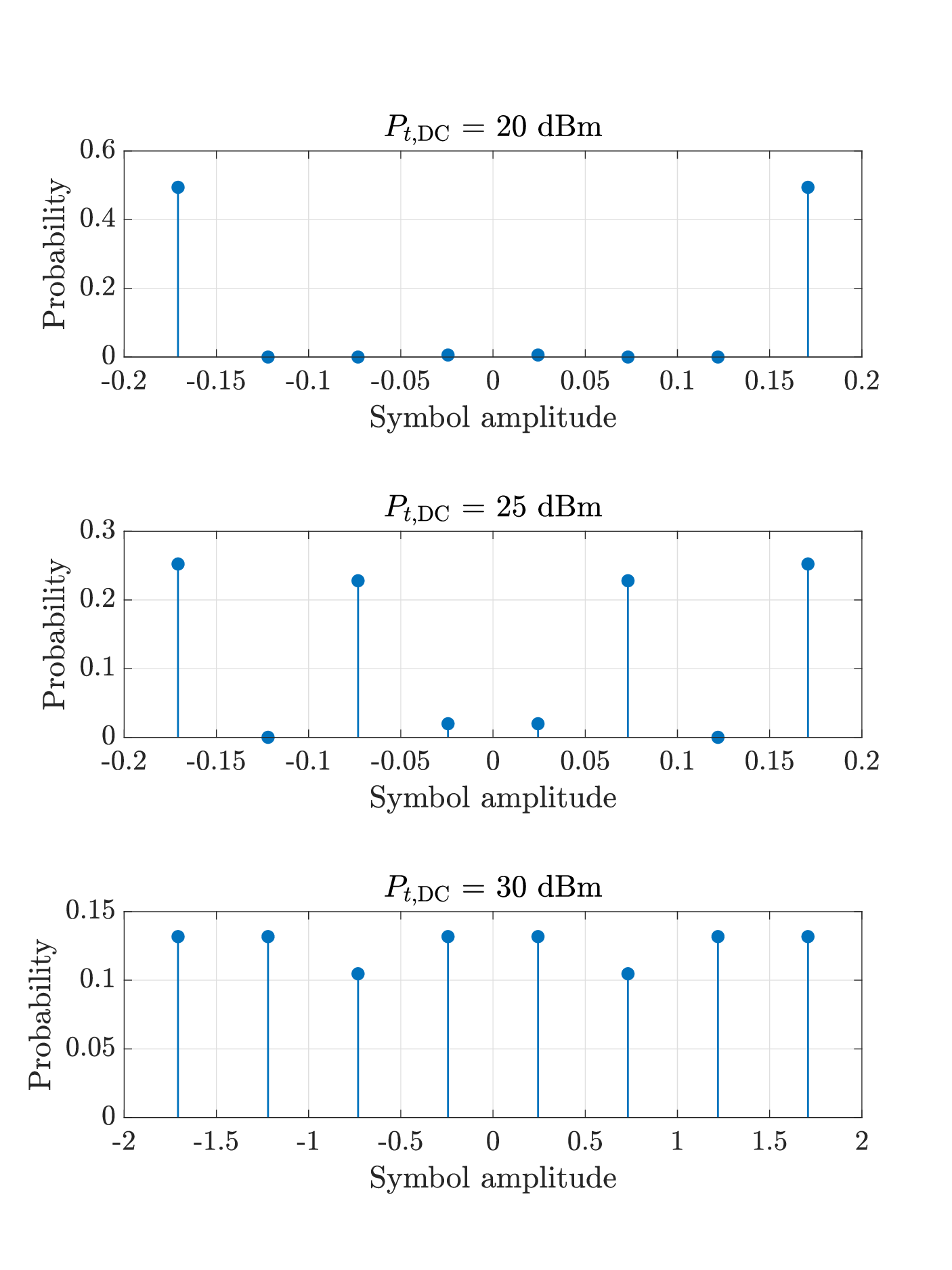}
         \vspace{-5mm}
         \caption{PCS design with distribution symmetry constraint.}
         \label{distribution_PCS_UnknownEve_wo_symmetry}
     \end{subfigure}
     \hfill
     \caption{Optimal constellation distributions of 8-PAM PCS designs: Unknown Eve' CSI.}
\end{figure*}

\begin{figure}[ht]
     \centering
     \begin{subfigure}[b]{0.24\textwidth}
         \centering
         \includegraphics[height = 5.8cm, width = \textwidth]{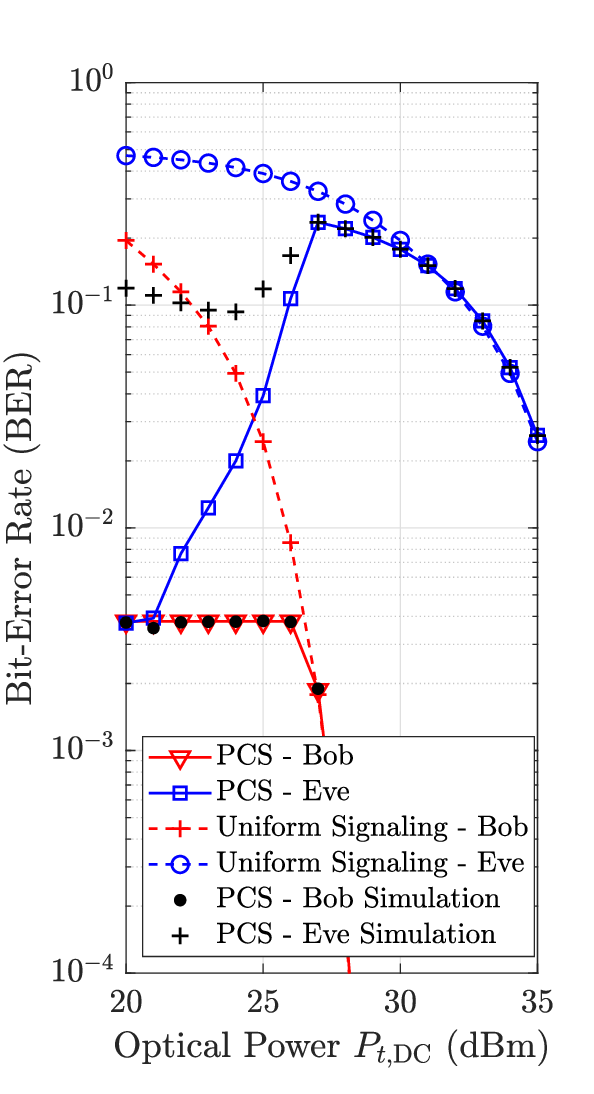}
         \caption{Known Eve's CSI.}
         \label{QoS_KnownEve}
     \end{subfigure}
     \hfill
     \begin{subfigure}[b]{0.24\textwidth}
         \centering
         \includegraphics[height = 5.8cm, width = \textwidth]{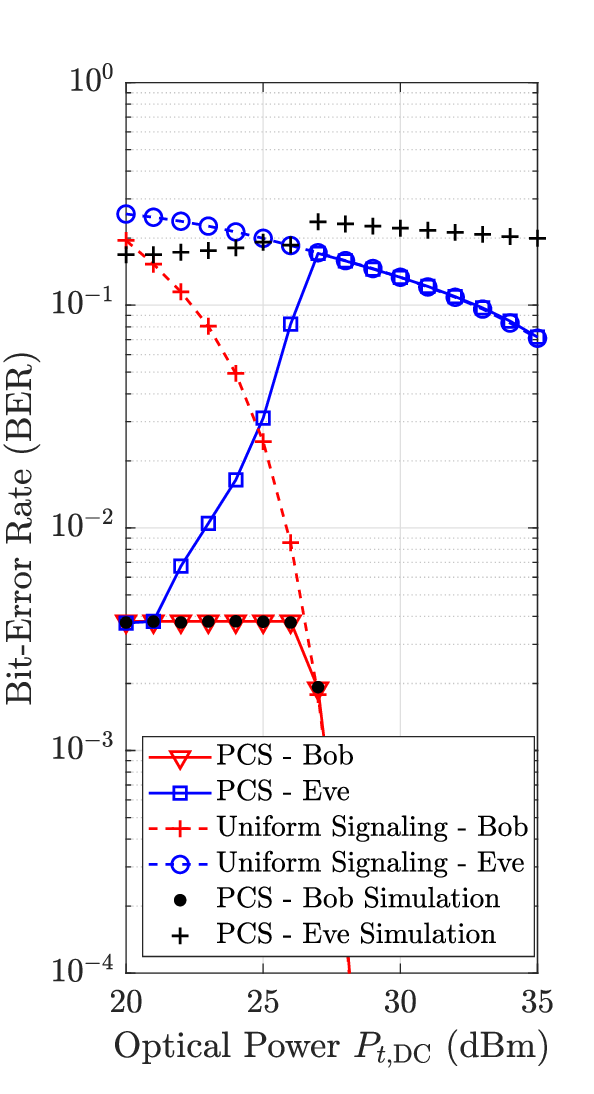}
         \caption{Unknown Eve's CSI.}
         \label{QoS_UnKnownEve}
     \end{subfigure}
     \hfill
     \caption{Performance of QoS-based PCS designs.}
\end{figure}
The BER performances of the QoS-based PCS designs are illustrated in Figs.~\ref{QoS_KnownEve} and \ref{QoS_UnKnownEve} for the case of known and unknown Eve's CSI, respectively. Here, we consider the 8-PAM constellation. For this design approach, we pay attention to the BER of Eve's channel, which determines the level of physical security as the higher BER of Eve's channel implies that less information is correctly decoded by Eve. We observe that as the optical power increases, the approximate BER of Eve's channel (i.e., the solution of problem \eqref{OptProb8}) increases until a certain value then starts decreasing. This is because increasing the optical power leads to more relaxed constraints, resulting in a higher BER of Eve's channel (i.e., better objective value). Nonetheless, when the power exceeds a certain threshold, its impact becomes dominant; thus, as the power increases, the BER decreases. It is noticed that the actual BER of Eve's channel depicted by simulations is much higher than the pre-FEC BER threshold, which renders the unreliability of Eve's channel, thus confirming the effectiveness of our proposed QoS-based PCS designs. 

\section{Conclusion} 
In this paper, practical PCS designs of $M$-PAM constellation to improve the PLS performance of VLC channels have been investigated considering constraints on channel reliability, flickering, and distribution symmetry. To incorporate the channel reliability into the PCS design problems, tractable closed-form expressions for the lower and upper bound BER of PCS-PAM under arbitrary constellation distribution were derived. By proving the concavity of the lower and upper bound BER expressions, the proposed PCS design problems for both cases of known and unknown Eve's CSI were solved using the CCCP algorithm. 
Numerical simulations verified the effectiveness of the proposed designs as the considered constraints were met under the entire range of the practical LED's emitted optical power. Furthermore, it was also revealed that the per-FEC BER constraint renders the optimal constellation symbols sparse at the low optical power regime. For future work, one may consider PCS designs for systems with multiple legitimate users and/or multiple eavesdroppers where the constellation distributions for the users should be jointly optimized.  
\begin{appendices}
\vspace{-3mm}
\section{Proof of Proposition 1}
We rewrite \eqref{SEP1} as
\begin{align}
    P^{\text{U}}_{m, n}(\mathbf{p}) &= \text{Pr}\left(\frac{p_m}{p_n} \leq \frac{p(\overline{y}_{\text{U}}|s = s_n)}{p(\overline{y}_{\text{U}}|s = s_m)}\right) \nonumber \\
    & = \text{Pr}\left(\log\frac{p_m}{p_n} \leq -\frac{(\overline{y}_{\text{U}} - r_{\text{U}, n})^2}{2\sigma^2_{\text{U}}} + \frac{(\overline{y}_{\text{U}} - r_{\text{U}, m})^2}{2\sigma^2_{\text{U}}}\right).
\end{align}
Since we assume that the symbol $m$ is transmitted, we have $\overline{y}_{\text{U}} = r_{\text{U},m} + n_{\text{U}}$. Therefore, 
\begin{align}
    P^{\text{U}}_{m, n}(\mathbf{p}) &= \text{Pr}\left(2\sigma^2_{\text{U}}\log\frac{p_m}{p_n} \leq -\left(r_{\text{U}, m} - r_{\text{U}, n} + n_{\text{U}}\right)^2 + n^2_{\text{U}}\right) \nonumber \\
    & = \text{Pr}\left(n_{\text{U}}d_{\text{U}, m, n} \leq \frac{-2\sigma^2_{\text{U}}\log\frac{p_m}{p_n} - d^2_{\text{U}, m, n}}{2}\right) \nonumber \\
    & \overset{(a)}{=} \frac{1}{2}\text{erfc}\left(\frac{2\sigma^2_{\rm{U}}\log\frac{p_m}{p_n} + d^2_{{\rm{U}}, m, n}}{2\sqrt{2}\sigma_{{\rm{U}}}|d_{{\rm{U}}, m, n}|}\right),
    \label{prob11}
\end{align}
where $d_{\text{U}, m, n} = r_{\text{U},m} - r_{\text{U}, n} = h_{\text{U}}\gamma\eta(a_m - a_n)$ and $(a)$ is due to the fact that  $n_{\text{U}}d_{\text{U}, m, n}$ is a Gaussian random variable with mean 0 and variance $\sigma^2_\text{U}d^2_{\text{U}, m, n}$. The proof is completed. 
\section{Proof of Proposition 2}
We rewrite \eqref{PCS-BER} as 
\begin{align}
    P^{\text{U}}_{\text{e,ub}}(\mathbf{p}) & \approx \frac{1}{\log_2M}\sum_{m = 1}^M\sum_{\substack{n = 1 \\ n\neq m}}^M\left(p_m{\rm{erfc}}\left(\frac{2\sigma^2_{\rm{U}}\log\frac{p_m}{p_n} + d^2_{{\rm{U}}, m, n}}{2\sqrt{2}\sigma_{{\rm{U}}}|d_{{\rm{U}}, m, n}|}\right) \right. \nonumber \\ & \hspace{23mm}\left. + p_n{\rm{erfc}}\left(\frac{2\sigma^2_{\rm{U}}\log\frac{p_n}{p_m} + d^2_{{\rm{U}}, n, m}}{2\sqrt{2}\sigma_{{\rm{U}}}|d_{{\rm{U}}, n, m}|}\right) \right) \nonumber \\ 
    & = \frac{1}{\log_2M}\sum_{m = 1}^M\sum_{\substack{n = 1 \\ n\neq m}}^M g_{m, n}(\mathbf{p}).
\end{align}
To show that $P^{\text{U}}_{\text{e,ub}}(\mathbf{p})$ is concave, it suffices to prove that for any $m \neq n $, the following function 
\begin{align}
    g_{m, n}(\mathbf{p}) = & p_m{\rm{erfc}}\left(\frac{2\sigma^2_{\rm{U}}\log\frac{p_m}{p_n} + d^2_{{\rm{U}}, m, n}}{2\sqrt{2}\sigma_{{\rm{U}}}|d_{{\rm{U}}, m, n}|}\right) \nonumber \\ &+ p_n{\rm{erfc}}\left(\frac{2\sigma^2_{\rm{U}}\log\frac{p_n}{p_m} + d^2_{{\rm{U}}, n, m}}{2\sqrt{2}\sigma_{{\rm{U}}}|d_{{\rm{U}}, n, m}|}\right)
\end{align}
is concave over $\mathbf{p}$. For the sake of conciseness, let $\alpha_{m, n}  = \frac{\sigma_{\text{U}}}{\sqrt{2}|d_{\text{U}, m, n}|}$, $\alpha_{n, m}  = \frac{\sigma_{\text{U}}}{\sqrt{2}|d_{\text{U}, n, m}|}$, $\beta_{m, n}  = \frac{|d_{\text{U}, m, n}|}{2\sqrt{2}\sigma_{\text{U}}}$, and $\beta_{n, m}  = \frac{|d_{\text{U}, n, m}|}{2\sqrt{2}\sigma_{\text{U}}}$. Note that since $|d_{\text{U}, m, n}| = |d_{\text{U}, n, m}|$, $\alpha_{m, n}  = \alpha_{n, m}  = \alpha > 0$ and $\beta_{m, n}  = \beta_{n, m} = \beta > 0$. Also, let $\theta_{m, n} = \alpha \log\frac{p_m}{p_n} + \beta $ and $\theta_{n, m} = \alpha \log\frac{p_n}{p_m} + \beta$. The Hessian of $g(\mathbf{p})$ is 
\begin{align}
    \mathbf{H}_{g, m, n} = \begin{bmatrix}
        0 & \cdots & 0 & \cdots & 0 & \cdots & 0 \\
        \vdots & & \vdots & & \vdots & & \vdots\\
        0 & \cdots & \frac{\partial^2g_{m, n}(\mathbf{p})}{\partial p^2_m} & \cdots &  \frac{\partial^2g_{m, n}(\mathbf{p})}{\partial p_m\partial p_n} & \cdots & 0 \\
        \vdots & & \vdots & & \vdots & & \vdots \\ 
        0 & \cdots \ & \frac{\partial^2g_{m, n}(\mathbf{p})}{\partial p_n\partial p_m} & \cdots & \frac{\partial^2g_{m, n}(\mathbf{p})}{\partial p^2_n} & \cdots & 0 \\
        \vdots & & \vdots & & \vdots & & \vdots \\
        0 & \cdots & 0 & \cdots & 0 & \cdots & 0 \\
    \end{bmatrix},
\end{align}
where 
\begin{align}
    \frac{\partial^2 g_{m, n}(\mathbf{p})}{\partial p^2_m} = \frac{2}{\sqrt{\pi}p^2_m}\left(2\alpha^2p_m \exp(-\theta^2_{m, n})\theta_{m, n} - \alpha p_m\exp(-\theta^2_{m, n}) \right. \nonumber \\ \left. + 2\alpha^2p_n\exp(-\theta^2_{n, m})\theta_{n, m} - \alpha p_n\exp(-\theta^2_{n, m})\right), \nonumber \\
\end{align}
\begin{align}
    \frac{\partial^2 g_{m, n}(\mathbf{p})}{\partial p^2_n} = \frac{2}{\sqrt{\pi}p^2_n}\left(2\alpha^2p_m \exp(-\theta^2_{m, n})\theta_{m, n} - \alpha p_m\exp(-\theta^2_{m, n}) \right. \nonumber \\ \left. + 2\alpha^2p_n\exp(-\theta^2_{n, m})\theta_{n, m} - \alpha p_n\exp(-\theta^2_{n, m})\right),
\end{align}
and
\begin{align}
    &\frac{\partial^2 g_{m, n}(\mathbf{p})}{\partial p_m\partial p_n}  = \frac{\partial^2 g_{m, n}(\mathbf{p})}{\partial p_n\partial p_m} \nonumber \\ &= \frac{2}{\sqrt{\pi}p_mp_n}\left(-2\alpha^2p_m \exp(-\theta^2_{m, n})\theta_{m, n} + \alpha p_m\exp(-\theta^2_{m, n}) \right. \nonumber \\ & ~~~~~\left.  - 2\alpha^2p_n\exp(-\theta^2_{n, m})\theta_{n, m} + \alpha p_n\exp(-\theta^2_{n, m})\right).
\end{align}
For any nonzero real vector $\mathbf{z} = \begin{bmatrix}
    z_1 & z_2 & \cdots & z_M
\end{bmatrix}^T$, we have
\begin{align}
    &\mathbf{z}^T \mathbf{H}_{g, m, n} \mathbf{z} = \nonumber \\ & \frac{2}{\sqrt{\pi}}\left(\frac{z_m}{p_m} - \frac{z_n}{p_n}\right)^2 \nonumber  \left(2\alpha^2p_m \exp(-\theta^2_{m, n})\theta_{m, n}  - \alpha p_m\exp(-\theta^2_{m, n}) \right. \nonumber \\ & \left. + 2\alpha^2p_n\exp(-\theta^2_{n, m})\theta_{n, m} - \alpha p_n\exp(-\theta^2_{n, m})\right).
\end{align}
Let $\Phi_{m, n} = 2\alpha^2p_m \exp(-\theta^2_{m, n})\theta_{m, n}  - \alpha p_m\exp(-\theta^2_{m, n})  + 2\alpha^2p_n\exp(-\theta^2_{n, m})\theta_{n, m} - \alpha p_n\exp(-\theta^2_{n, m})$. To prove the concavity of $g_{m, n}(\mathbf{p})$, it is necessary to show that  $\Phi_{m, n}$ is non-positive. Indeed, we have
\begin{align}
    \frac{\Phi_{m, n}}{\exp(-\theta^2_{m, n})}  & =  2\alpha^2p_m\theta_{m, n} + 2\alpha^2p_n\exp(\theta^2_{m, n} - \theta^2_{n, m})\theta_{n, m} \nonumber \\ & ~~~- \alpha p_m - \alpha p_n\exp(\theta^2_{m, n} - \theta^2_{n, m}).
    \label{phi_m, n}
\end{align}
Notice that $\exp(\theta^2_{m, n} - \theta^2_{n, m}) = \exp((\theta_{m, n}-\theta_{n, m})(\theta_{m, n} + \theta_{n, m})) = \exp\left(4\alpha\beta\log\frac{p_m}{p_n}\right) = \frac{p_m}{p_n}$. The expression in \eqref{phi_m, n} becomes
\begin{align}
    \frac{\Phi_{m, n}}{\exp(-\theta^2_{m, n})} & =   2\alpha^2p_m(\theta_{m, n} + \theta_{n, m}) - 2\alpha p_m \nonumber \\ & = 4\alpha^2\beta p_m - 2\alpha p_m \nonumber \\ & = \alpha p_m -2 \alpha p_m  = -\alpha p_m \leq 0. 
\end{align}
Therefore, we have $\mathbf{z}^T \mathbf{H}_{g, m, n} \mathbf{z} \leq 0$, which implies that $g_{m, n}(\mathbf{p})$ is concave. Since $ P^{\text{U}}_{\text{e,ub}}(\mathbf{p})$ is the sum of concave functions, it is also concave. The proof is complete.
\end{appendices}
\bibliographystyle{IEEEtran}
\bibliography{references}
\end{document}